\documentclass[lettersize,journal]{IEEEtran}
\usepackage{amsmath,amsfonts,amssymb}
\usepackage{bm}
\usepackage{graphicx}
\usepackage{amsthm}
\usepackage{subfigure}
\usepackage{textcomp}
\usepackage{threeparttable}
\usepackage{makecell}
\usepackage{multirow,multicol}
\usepackage[colorlinks]{hyperref}
\theoremstyle{remark}
\newtheorem{theorem}{Theorem}
\newtheorem{lemma}{Lemma}
\newtheorem{corollary}{Corollary}

\newcommand{\var}{\operatorname{var}}
\newcommand{\cov}{\operatorname{cov}}

\begin{document}
\title{Phased Ultra Massive Array (PUMA)}

\author{Hanjiang Hong,~\IEEEmembership{Member,~IEEE}, 
Kai-Kit Wong,~\IEEEmembership{Fellow,~IEEE}, 
Xusheng Zhu,~\IEEEmembership{Member,~IEEE},
Chenguang Rao,
Dazhi He,~\IEEEmembership{Senior Member,~IEEE}, and 
Hyundong~Shin,~\IEEEmembership{Fellow,~IEEE}
\vspace{-5mm}

\thanks{The work of K. K. Wong is supported by the Engineering and Physical Sciences Research Council (EPSRC) under Grant EP/W026813/1.}
\thanks{The work of X. Zhu was supported by the NSFC under Grant 624B2094 and the Outstanding Doctoral Graduates Development Scholarship of Shanghai Jiao Tong University.}
\thanks{The work of D. He is supported by the National Natural Science Foundation of China (62271316).}
\thanks{The work of H. Shin is supported by the National Research Foundation of Korea (NRF) grant funded by the Korean government (RS-2025-00556064 and RS-2025-25442355), and by the Ministry of Science and ICT (MSIT), Korea, under the ITRC (Information Technology Research Center) support program (IITP-2025-RS-2021-II212046), supervised by the IITP (Institute for Information \& Communications Technology Planning \& Evaluation).}
\thanks{H. Hong, K. K. Wong, X. Zhu, and C. Rao are with the Department of Electronic and Electrical Engineering, University College London, London, United Kingdom. K. K. Wong is also affiliated with the Department of Electronic Engineering, Kyung Hee University, Yongin-si, Gyeonggi-do 17104, Korea (e-mail: \{hanjiang.hong, kai-kit.wong, xusheng.zhu, chenguang.rao\}@ucl.ac.uk).}
\thanks{D. He is with the Cooperative Medianet Innovation Center (CMIC), School of Information and Electronic Engineering, Shanghai Jiao Tong University, Shanghai 200240, China (e-mail:hedazhi@sjtu.edu.cn).}
\thanks{H. Shin is with the Department of Electronics and Information Convergence Engineering, Kyung Hee University, Yongin-si, Gyeonggi-do 17104, Republic of Korea (e-mail: hshin@khu.ac.kr).}

\thanks{The reproducible MATLAB code is online at \url{https://github.com/honghjvvv/Phased-Ultra-Massive-Array-PUMA.git}.}

\thanks{Corresponding author: Kai-Kit Wong.}
}
\maketitle

\begin{abstract}
This paper proposes a novel multiple-access framework, termed the phased ultra massive antenna array (PUMA), which exploits the distinctive spatial flexibility of fluid antenna systems (FAS) at the user equipment (UE). Building upon fluid antenna multiple access (FAMA) and compact ultra-massive antenna array (CUMA), PUMA incorporates a phased array for signal aggregation. This architecture enables the UE to inherently mitigate co-user interference within the spatial domain without necessitating channel state information (CSI) for precoding at the base station (BS) or complex interference cancellation at each UE. A primary advantage of PUMA lies in its hardware efficiency: by implementing phase shifting and signal combining in the analog domain, it achieves high antenna gain while requiring only a minimal number of radio-frequency (RF) chains, potentially a single RF chain. Comprehensive theoretical analysis of the achievable data rate is provided, complemented by extensive simulations that validate the framework. The results demonstrate that PUMA markedly outperforms FAMA and CUMA architectures, particularly for UEs with a single RF chain, offering a robust and scalable solution for interference-insensitive massive connectivity in sixth-generation (6G) systems.
\end{abstract}

\begin{IEEEkeywords}
Fluid antenna system (FAS), fluid antenna multiple access (FAMA), massive connectivity, rate.
\end{IEEEkeywords}

\section{Introduction}
\IEEEPARstart{T}{he transition} to the sixth-generation (6G) wireless networks is driven by the increasing demand for ubiquitous, high-data-rate, and massive connectivity \cite{andrews20246gtakes,ngo2024ultradense}. As the internet of things (IoT) advances into internet of everything (IoE), 6G is anticipated to support connection densities reaching up to $10^7$ devices/km$^2$ \cite{ding2025iot,nguyen20226ginternet,silva2025distributed}. This unprecedented scale of connectivity poses a substantial challenge to the underlying physical-layer technologies, requiring a fundamental change in the management of multiuser interference and the utilization of the available electromagnetic spectrum.

To address these requirements, advanced technologies such as multiuser multiple-input multiple-output (MIMO) \cite{wang2024extremely} and non-orthogonal multiple access (NOMA) \cite{clercks2024multiple,ahmed2024unveil} have been the subject of extensive research. Massive MIMO employs large-scale antenna arrays at the base station (BS) to attain high spectral efficiency through spatial multiplexing. Nevertheless, its performance is substantially dependent on the precise acquisition of channel state information (CSI), which becomes increasingly challenging in high-mobility or large-user/antenna count cases due to pilot contamination and signaling overhead. Conversely, NOMA permits multiple users to share the same resource block, albeit typically necessitating successive interference cancellation (SIC) at the user equipment (UE) to handle the inter-user interference. The complexity of SIC scales exponentially with the number of users, and its effectiveness is highly sensitive to power-level disparities and error propagation, which prohibits its adoption in 6G. 

Recently, fluid antenna multiple access (FAMA) was introduced as a radical approach for massive connectivity \cite{wong2022FAMA,hong2026famasurvey}. FAMA eliminates the need for precoding and SIC, instead depending on the repositioning capability of the antenna enabled by the fluid antenna system (FAS) concept \cite{wong2020FAS,wong2021FAS}. FAS signifies a shift toward a flexible radio architecture in which the antenna element can be reconfigured to provide diversity and capacity benefits \cite{New2024aTutorial,Lu2025fluid,hong2025contemporary,New2026jsac}. In practice, FAS can take many forms, including liquid-based antennas \cite{shen2024design,Shamim-2025}, metamaterials \cite{Liu2025programmable,Zhangjsac2026}, and pixel-based antennas \cite{zhang2024pixel,liu2025iot}. Numerous studies on FAS have already been conducted, covering areas such as physical-layer performance \cite{espinosa2024anew,New2023fluid,zhu2025fluid,Khammassi2023,zhang2025finite}, CSI estimation \cite{xu2023channel,zhang2025successive}, integration with reconfigurable intelligent surfaces (RIS) \cite{ghadi2024on,xiao2025fluid}, sensing \cite{zhou2024sensing,zhang2026jsac}, and full-duplexing \cite{hong2026FDSIC,tang2026FD}. More recently, positive findings have been reported regarding its integration into fifth-generation (5G) New Radio (NR) systems \cite{hong2025fasofdm}.

The reconfigurability of FAS allows the receiver to leverage the fine-grained variations of the fading envelope in the spatial domain, effectively exploiting deep interference fades or the peaks of the desired signal even within a space-constrained UE. In FAMA, inter-user interference is inherently mitigated at the receiver by utilizing the high spatial resolution provided by a sufficiently large array of ports within the FAS. This enables multiple users to transmit simultaneously on the same time-frequency resource block without requiring complex coordination from the BS and/or intricate SIC at the UE. Various forms of FAMA schemes have been developed, including fast FAMA (\emph{f}FAMA) \cite{wong2022fast}, slow FAMA (\emph{s}FAMA) \cite{wong2023sFAMA,Xu2024revisiting,coma2024slow,zhang2025sFAMA,hong2025multiport}, and coded FAMA \cite{hong20245gcoded,hong2025coded,waqar2025turbocharging,hong2025Downlink}. To further enhance this concept for ultra-dense deployments, the compact ultra-massive antenna array (CUMA) was introduced, emphasizing the activation of numerous ports and the aggregation of received signals in the analog domain to generate the signal for digital communication \cite{Wong2024cuma}. Typically, CUMA considers configurations with only one or two radio frequency (RF) chains at the receiver, to keep the hardware at low cost. It was later expanded to accommodate four \cite{wong2024compact} or any even number of RF chains \cite[Section V]{New2024aTutorial}. Recent research has focused on optimizing port selection within the CUMA architecture \cite{rao2025geometric}. While FAMA and CUMA effectively eliminate the need for BS-side CSI and UE-side SIC, their performance can be indeed improved if co-phasing can be performed in the analog domain before signal aggregation of the selected ports. 

In fact, the feasibility of deploying large-scale phase shifter arrays has improved due to recent advancements in complementary metal-oxide-semiconductor (CMOS) technology and integrated circuit (IC) fabrication \cite{uchino2025PS}. Analog-phase shifters offer substantial reductions in power consumption and occupy smaller chip areas \cite{morishita2025_150}. Motivated by this, this paper proposes a novel FAMA framework termed the phased ultra massive array (PUMA), an advancement of the FAMA and CUMA architectures, designed to achieve superior interference mitigation performance. From an antenna design perspective, PUMA integrates FAS at each UE with an analog-domain phased array for aggregation. By selectively activating or deactivating certain antenna ports and integrating the phased array, PUMA markedly improves received signal strength while preserving the natural interference-agnostic features of its predecessors. In PUMA, FAS ports activation, phase shifters, and signal combination are realized entirely in the analog domain, enabling the exploitation of ultra-massive antenna gains with a minimal number of RF chains, enabled by the FAS concept.

The goal of this work is to illustrate that by transferring the interference management from the digital baseband to the spatial/analog domain through the use of FAS and phased array technology, we can achieve robust multiuser communication without the complexity of precoding at the BS and SIC at the UE. The main contributions are summarized as follows:
\begin{itemize}
\item We propose the PUMA framework, whereby a UE needs only one or a limited number of RF chains and employs phase shifters to aggregate signals from multiple selected fluid antenna ports. The phase shifters facilitate the steering of signals in a constructive manner, while the interference signals are incorporated randomly, thereby enhancing the array gain compared to conventional FAMA.
\item We present a theoretical analysis of the data-rate performance of PUMA, deriving closed-form expressions for the probability density function (PDF) of the signal-to-interference ratio (SIR). Utilizing the PDF, we then derive the achievable rate in the presence of substantial co-user interference. The interference resilience of the UE is expected to improve as the FAS configuration increases.
\item Extensive simulation results are presented to evaluate PUMA's performance. Our results demonstrate that the network rate of PUMA increases with the number of UEs and outperforms CUMA and \emph{s}FAMA greatly, especially when constrained to only a single RF chain.
\end{itemize}

The remainder of this paper is structured as follows. Section~\ref{sec:NetMod} delineates the system model for the conventional FAMA and CUMA frameworks. Then Section~\ref{sec:PUMA} details the signal model and the receiver architecture of the proposed PUMA system. Subsequently, Section~\ref{sec:Perf} offers the SIR and rate analysis for PUMA. Numerical results are provided and discussed in Section~\ref{Sec:Sim}, and the paper concludes with Section~\ref{sec:conclusion}.

{\em Notations:} Throughout, scalars are represented by lowercase letters while vectors and matrices are denoted by lowercase and uppercase boldface letters, respectively. Transpose and hermitian operations are denoted by superscript $T$ and $\dag$, respectively. For a complex scalar $x$, $\lvert x \rvert$ represents its modulus. For a set $\mathcal{X}$, $\lvert \mathcal{X} \rvert$ represents its cardinal number.

\section{FAMA and CUMA Network Model}\label{sec:NetMod}
In this paper, we consider a downlink network model where a BS with $N_t$ conventional fixed position antennas (FPAs) transmits to $U$ UEs. The BS antennas are spaced far enough apart to be spatially independent. Each UE is equipped with a two-dimensional FAS (2D-FAS) consisting of $N=(N_1 \times N_2)$ ports within a physical size of $W_1\lambda \times W_2 \lambda$, where $\lambda$ denotes the carrier wavelength. Within the 2D-FAS physical space, $N_i$ ports are evenly distributed along a linear dimension of length $W_i \lambda$ for $i \in \{1,2\}$. For simplicity, we map the antenna port $(k_1, k_2) \to k: k = (k_1-1)\times N_2 +k_2$, where $k_1 \in \{1, \dots, N_1\}$, $k_2 \in \{1, \dots, N_2\}$, and $k \in \{1, \dots, N\}$. The ports can be closely spaced, so their channels are correlated.

Hypothetically, if all the ports are activated, the received signals at the $u$-th UE in vector form can be given by
\begin{align}
\bm{r}^{(u)} & = \mathbf{H}_u \bm{b}_u s_u + \sum_{\substack{\tilde{u} = 1\\ \tilde{u}\neq u}}^U \mathbf{H}_u \bm{b}_{\tilde{u}} s_{\tilde{u}} + \bm{\eta}^{(u)}, \nonumber \\
& \equiv \bm{g}^{(u,u)} s_{u} + \sum_{\substack{\tilde{u} = 1\\ \tilde{u}\neq u}}^U \bm{g}^{(\tilde{u},u)} s_{\tilde{u}} + \bm{\eta}^{(u)},
\end{align}
in which $\mathbf{H}_u \in \mathbb{C}^{N\times N_t}$ denotes the channel matrix from the BS antennas to the ports of the $u$-th UE, $\bm{b}_u$ signifies the precoding vector employed for transmitting signals to the $u$-th UE, $\bm{g}^{(\tilde{u},u)}\triangleq \mathbf{H}_u \bm{b}_{\tilde{u}} \in \mathbb{C}^{N}$ denotes the effective channel vector from UE $\tilde{u}$ to the ports of UE $u$, $s_{u}$ is the transmitted symbol for UE $u$ satisfying $\mathbb{E} [|s_{u}|^2] = \sigma_s^2$, and $\bm{\eta}^{(u)}$ represents the zero-mean complex Gaussian noise vector at the ports of UE $u$ whose elements are independent, identically distributed (i.i.d.), with variance of $\sigma_\eta^2$. More specifically, the received signal at the $k$-th port of the $u$-th UE can be expressed as
\begin{equation}
r^{(u)}_k = g^{(u,u)}_k s_{u} + \sum_{\substack{\tilde{u}=1\\\tilde{u}\neq u}}^{U} g^{(\tilde{u},u)}_k s_{\tilde{u}} + \eta^{(u)}_k,
\end{equation}
where $\bm{r}^{(u)} = [r^{(u)}_1,\dots,r^{(u)}_N]^T$, $\bm{\eta}^{(u)} = [\eta^{(u)}_1,\dots,\eta^{(u)}_N]^T$, and $\bm{g}^{(\tilde{u},u)} = [{g}^{(\tilde{u},u)}_1, \dots, {g}^{(\tilde{u},u)}_N]^T$.

In this paper, the precoding vector $\{\bm{b}_u\}$ at the BS is chosen, without the requirement of CSI, as an orthonormal basis that spans $N_t \times N_t$ complex space, ensuring adequate channel differentiation among UEs for the proper functioning of FAMA \cite{wong2022extra}. This facilitates the independence of the effective channel $\bm{g}^{(\tilde{u},u)}$. Consequently, regarding $\bm{g}^{(\tilde{u},u)}$, a block fading finite-scattering channel model is employed, expressed as \cite{buzzi2016clustered}
\begin{multline}
\boldsymbol{g}^{(\tilde{u},u)} =\sqrt{\frac{K\sigma_{(\tilde{u},u)}^2}{K+1}} e^{j\delta_{(\tilde{u},u)}} \boldsymbol{a}\left(\theta_0^{(\tilde{u},u)}, \phi_0^{(\tilde{u},u)}\right)\\ 
+ \sqrt{\frac{\sigma_{(\tilde{u},u)}^2}{N_p(K+1)}} \sum_{l=1}^{N_p} \alpha_l^{(\tilde{u},u)} \boldsymbol{a}\left(\theta_l^{(\tilde{u},u)}, \phi_l^{(\tilde{u},u)}\right),
\end{multline}
where $K$ denotes the Rice factor, $\sigma_{(\tilde{u},u)}^2$ represents the channel power, $\delta_{(\tilde{u},u)}$ signifies the phase of the line-of-sight (LoS) component, $N_p$ indicates the number of scattered components, $\alpha_l^{(\tilde{u},u)}$ is the random complex coefficient of the $l$-th scattered path, and $\boldsymbol{a}(\theta_l^{(\tilde{u},u)}, \phi_l^{(\tilde{u},u)})$ is the steering vector, with $\theta_l^{(\tilde{u},u)}$ and $\phi_l^{(\tilde{u},u)}$ representing the azimuth and elevation angles-of-arrival (AoA), respectively. When $l=0$, the parameters correspond to those for the LoS channel. For simplicity, we set $\sigma_{(\tilde{u},u)} = \sigma_g, \forall (\tilde{u},u)$ in this paper. Omitting the index $(\tilde{u},u)$, the steering vector $\boldsymbol{a}(\theta_l, \phi_l)$ is defined as
\begin{equation}
\boldsymbol{a}(\theta_l, \phi_l) = {[1,e^{-j\frac{2\pi}{\lambda}d_l(2)},\dots,e^{-j\frac{2\pi}{\lambda}d_l(N)}]}^T,
\end{equation}
where $d_l(k)$ represents the propagation difference of the $l$-th path between the $(1,1)\to 1$-st port and the $(k_1,k_2)\to k$-th port. Accordingly, we have
\begin{equation}
d_l(k) \!=\! \frac{(k_1-1)W_1\lambda}{N_1-1}\sin \theta_l \cos \phi_l + \frac{(k_2-1)W_1\lambda}{N2-1} \cos \theta_l.
\end{equation}

A particular case pertains to the rich scattering scenario where $K=0$ and $N_p\to \infty$. In this context, the variables, $g^{(\tilde{u},u)}_k$, follow an identical complex Gaussian distribution, but the channels ${\{g^{(\tilde{u},u)}_k\}}_{\forall k}$ are correlated, characterized by a covariance matrix $ \mathbb{E}\left[ \boldsymbol{g}^{(\tilde{u},u)} (\boldsymbol{g}^{(\tilde{u},u)})^ \dag \right] = \sigma_g^2 \boldsymbol{\Sigma}$, defined as \cite{Khammassi2023}
\begin{equation}\label{Eq:corr}
\rho_{k,\ell} \!\triangleq\! {\left[\boldsymbol{\Sigma}\right]}_{k,\ell}\!=\! j_0 \!\! \left(\! 2\pi \sqrt{{\left(\frac{k_1 \!-\! \ell_1}{N_1 \!-\! 1}\! W_1\right)}^2 \!\!+\! {\left(\frac{k_2 \!-\! \ell_2}{N_2 \!-\! 1}\! W_2\right)}^2}\right),
\end{equation}
where $j_0(\cdot)$ is the spherical Bessel function of the first kind.

In the conventional \emph{s}FAMA system \cite{wong2023sFAMA}, the port that maximizes the received signal-to-interference plus noise ratio (SINR) is designated as the optimal port for activation. Hence, for the $u$-th UE, we have 
\begin{equation}\label{Eq:sFAMA}
k^* = \arg \max_{k \in \Omega} \frac{\sigma_s^2 \lvert g^{(u,u)}_k\rvert ^2}{\sigma_s^2 \sum_{\substack{\tilde{u}=1\\\tilde{u}\neq u}}^{U} \lvert g^{(\tilde{u},u)}_k\rvert ^2 + \sigma_\eta^2},
\end{equation}
where $\Omega = \{1,\dots, N\}$ denotes the index set for the ports. 
More recently, \cite{coma2024slow,hong2025multiport,hong2025Downlink} have studied \emph{s}FAMA employing multi-port selection schemes within the digital domain. In these schemes, the number of activated ports is limited by the number of RF chains, such that $N^* = N_{\rm RF}$, with $N^*$ indicating the number of activated FAS ports.

From a different perspective, CUMA capitalizes on the combinatorial capability of analog modules. CUMA activates many ports and aggregates the received signals from these ports in the analog domain into a restricted number of RF chains for digital-domain processing~\cite{Wong2024cuma}. Consequently, we observe that $N^*\gg N_{\rm RF}$. Let us define the ports that are selected for activation and aggregation at the $i$-th RF chain having their indices stored in the set $\mathcal{K}_i$ ($i = 1,\dots,N_{\rm RF}$); hence $N^* = \lvert \underline{\mathcal{K}}\rvert$, where $\underline{\mathcal{K}} = \mathcal{K}_1 \cup \dots \cup \mathcal{K}_{N_{\rm RF}}$. CUMA employs certain strategies \cite[Section III-C]{Wong2024cuma} to determine the sets $\mathcal{K}_i$, and aggregates the signals directly in the analog domain as 
\begin{equation}
y_i = \bm{w}_{{\rm CUMA},i}^\dag \bm{r}^{(u)} = \sum_{k \in \mathcal{K}_i} r^{(u)}_k,
\end{equation}
where $\bm{w}_{{\rm CUMA},i} \triangleq \sum_{k \in \mathcal{K}_i} \bm{e}_k$ is the analog combining vector, and $\bm{e}_k$ is the standard basis vector. Then the estimated symbol with digital combining can be found as
\begin{equation} \label{eq:DigCom}
\tilde{s}_u = \bm{v}^\dag \bm{y}, 
\end{equation}
where $\bm{y} = [y_1,\dots,y_{N_{\rm RF}}]^T$ denotes the aggregated received signal vector, and $\bm{v} \in {\mathbb{C}^{N_{\rm RF}}}$ is the receiver digital beamforming vector. For benchmarking, the average received signal-to-noise ratio (SNR) for each UE is defined as $\Gamma \triangleq \sigma_s^2\sigma_g^2/\sigma_\eta^2$.

\begin{figure*}
\centering
\includegraphics[width = 0.8\linewidth]{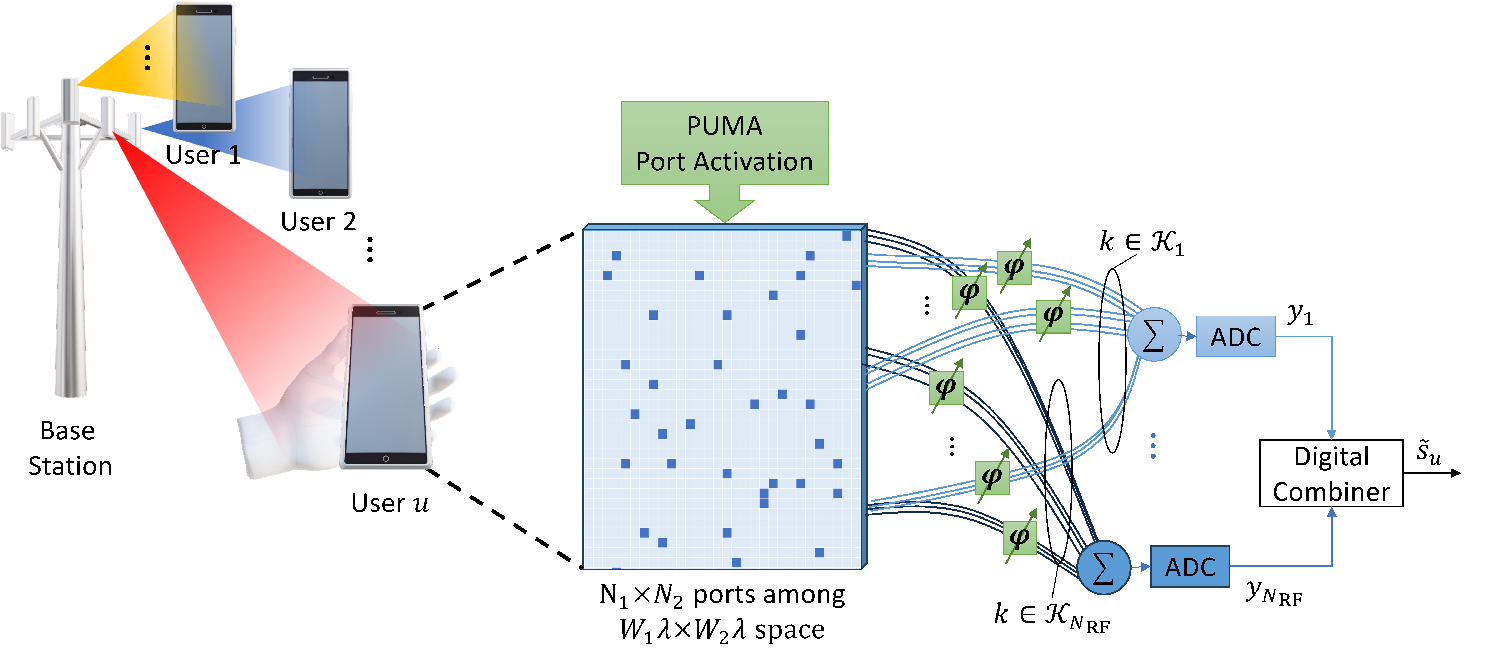}
\caption{Illustration of a downlink PUMA system.}\label{Fig:PUMA}
\end{figure*}

\section{PUMA}\label{sec:PUMA}
\subsection{Signal Model}
Here, we propose a new CUMA receiver architecture, referred to as PUMA. As illustrated in Fig.~\ref{Fig:PUMA}, a MIMO BS communicates with multiple PUMA-equipped UEs. The CSI, i.e., $\bm{g}_{(u,u)}$ encompassing the channel coefficients across all ports, is available at UE $u$. The principal distinction of PUMA from CUMA lies in its implementation of an analog domain phased array for signal aggregation.  Specifically, the ports selected for activation have their indices stored in $\mathcal{K}_i$ ($i=1, \dots, N_{\rm RF}$), and these activated ports undergo phase shifting during the aggregation process. Consequently, PUMA employs the analog combining weights constituted by constant-modulus complex elements for the ports within $\mathcal{K}_i$. This architectural approach mitigates significant power fluctuations. The methodology for determining the set $\mathcal{K}_i$ and configuring the phase shifts will be discussed comprehensively in Section \ref{subsec:WandK}.

With a large array of phase shifters, the signals from the activated ports given within the set $\mathcal{K}_i$ are phase shifted and subsequently aggregated to generate the received signal as
\begin{align}
y_i & = \bm{w}_i^\dag \bm{r}^{(u)}, \nonumber \\
& = \bm{w}_i^\dag \bm{g}^{(u,u)} s_{u} + \underbrace{\sum_{\substack{\tilde{u} = 1\\ \tilde{u}\neq u}}^U  \bm{w}_i^\dag  \bm{g}^{(\tilde{u},u)} s_{\tilde{u}} \!+\! \bm{w}_i^\dag \bm{\eta}^{(u)}}_{{z}_{i}},
\end{align}
where the interference-plus-noise component is denoted by ${z}_{i}$, the vector $\bm{w}_i = [w_{i,1},\dots,w_{i,N}]^T \in \mathbb{C}^{N\times 1}$, and $\lvert w_{i,k}\rvert = \mathbf{1}_{k\in \mathcal{K}_i}$ with $\mathbf{1}_{k\in \mathcal{K}}$ being an indicator function, taking the value $1$ when $k\in \mathcal{K}$, and $0$ otherwise.
For all $N_{\rm RF}$ RF chains, the received signals are represented by
\begin{align}
\bm{y} & = \mathbf{W}^\dag \bm{g}^{(u,u)} s_{u} + \bm{z},
\end{align}
where we have $\mathbf{W} = [\bm{w}_1,\dots,\bm{w}_{N_{\rm RF}}]\in \mathbb{C}^{N\times N_{\rm RF}}$, and $\bm{z} = [z_{1},\dots,z_{N_{\rm RF}}]^T$. With the knowledge of $\bm{g}^{(u,u)}$, the received signals $\bm{y}$ can be estimated and detected as in \eqref{eq:DigCom}.
Note that the phase shifters are employed within the analog domain during the aggregation process. The UE receiver only requires $N_{\rm RF}$ downconversion RF chains, thereby significantly reducing the hardware complexity of the receiver.

\subsection{Phased Array and Port Activation Schemes}\label{subsec:WandK}
To design the port activation schemes jointly with the phase shifters in PUMA, our objective is to maximize the average received SINR. Without loss of generality, we assume a unit energy transmission, i.e., $\sigma_s^2 = 1$, which results in
\begin{equation}
\overline{\gamma} =  \frac{|\bm{v}^\dag \mathbf{W}^\dag \bm{g}^{(u,u)}|^2}{|\bm{v}^\dag \bm{z}|^2}.
\end{equation}
As a consequence, the optimal PUMA combining vectors that maximize the SINR are expressed as
\begin{equation}\label{eq:optW}
\mathbf{W}^* = \arg \max_{\mathbf{W}: |w_{i,k}| = \mathbf{1}_{k \in \mathcal{K}_i}} \overline{\gamma},
\end{equation}
which is non-convex and challenging to solve. This work prioritizes the identification of a feasible approximate solution over the pursuit of the absolute optimal solution. For this reason, the application of an alternating solution is necessitated. 

Similar to the FAMA and CUMA systems, PUMA enables multiple access by leveraging the inherent advantages of the FAS at the receiver, rather than relying on complex multiuser detection algorithms. In this case, at the $u$-th UE, we can assume the variance of interference-plus-noise as $\mathbb{E}[|z_{i}|^2] = \sigma_I^2$. Utilizing the digital beamformer $\bm{v} = \mathbf{W}^\dag \bm{g}^{(u,u)} / \lVert \mathbf{W}^\dag \bm{g}^{(u,u)} \rVert$, the average SINR is given by 
\begin{equation}
\overline{\gamma} \approx \frac{\lVert \mathbf{W}^\dag \bm{g}^{(u,u)} \rVert}{\sigma_I^2}.
\end{equation}
Thus, the problem in \eqref{eq:optW} can be simplified as
\begin{equation}\label{eq:optW2}
\mathbf{W}^* = \arg \max_{\mathbf{W}: |w_{i,k}| = \mathbf{1}_{k \in \mathcal{K}_i}} {\sum_{f=1}^{N_c}\lVert \mathbf{W}^\dag \bm{g}^{(u,u)} \rVert}.
\end{equation}
The constraint $|w_{i,k}| = \mathbf{1}_{k \in \mathcal{K}_i}$ remains non-convex. 

We initially focus on the design of the phased array. Considering a simplified scenario, if the UE employs a single RF chain that activates all ports without imposing any amplitude constraints on the coefficients $\bm{w}_i$, i.e., $N_{\rm RF} = 1$, $\mathcal{K}_1 = \Omega$, the optimal beamforming vector would be $\bm{u} = \bm{g}^{(u,u)}$. We assume that the UE receiver chooses an approximate solution by projecting the unconstrained vector $\bm{u}$ onto the nearest point within the space of vectors with constant amplitude, i.e.,
\begin{equation}\label{Eq:wijApp}
\tilde{w}_{i,k} = \begin{cases}
e^{j \angle g^{(u,u)}_k}, & k \in \mathcal{K}_i,\\
0, & \text{otherwise},
\end{cases}
\end{equation}
where $\tilde{w}_{i,k}$ denotes the $(i,k)$-th element of the approximated solution $\tilde{\mathbf{W}}$. With this, the average SINR can be estimated as
\begin{equation}
    \tilde{\gamma} = \frac{\sum_{i=1}^{N_{\rm RF}} \sum_{k\in \mathcal{K}_i} |g^{(u,u)}_k|}{\sigma_I^2}.
\end{equation}  
The phased array ensures that all ports can support the desired signal constructively. Ideally, the system can achieve the optimal performance by activating all ports. However, for practical reasons, it may be advantageous to reduce the cardinal number of $\mathcal{K}_i$, since the total number of ports, $N$, is typically very large. Similar to the CUMA scheme, we propose employing the scaling parameter $\rho$ ($0\leq \rho \leq 1$) and a maximum limit on the number of active ports $N_{\max}$ to facilitate this adjustment.

Specifically, we first employ $\rho$ to shortlist the ports as 
\begin{equation}\label{Eq:shortlist}
\tilde{\mathcal{K}} = \left\{k: \lvert g^{(u,u)}_k\rvert \geq  \rho \max_{\ell} \lvert g^{(u,u)}_\ell\rvert\right\}.
\end{equation}
Subsequently, a second criterion is implemented to restrict the cardinality of the set $\mathcal{K}_i$ to a maximum of $N_{\max}$. To achieve this, a random selection of up to $N_{\max}$ elements can be made to constitute $\mathcal{K}_i$, i.e., 
\begin{equation}\label{Eq:shortlistNm}
\left\{ \mathcal{K}_i \subseteq \tilde{\mathcal{K}} | \lvert \mathcal{K}_i\rvert \leq N_{\max}\right\},~ i= 1, \dots, N_{\rm RF}.
\end{equation}

\subsection{Mutual Coupling}
Closely spaced ports result in considerable mutual coupling. It is possible to incorporate mutual coupling effects so that the received signal vector is given by \cite{Clerckx2007MC}
\begin{equation}\label{Eq:yMC}
\tilde{\bm{y}} = \tilde{\mathbf{W}}^{\dag}\mathbf{\Gamma}_{\rm m} \mathbf{S}_{\underline{\mathcal{K}}} \bm{r}^{(u)},
\end{equation}
where $\tilde{\mathbf{W}}$ is the phase shifter matrix as specified in \eqref{Eq:wijApp}, $\mathbf{S}_{\underline{\mathcal{K}}}$ denotes a diagonal matrix with the diagonal element being ``1'' to indicate activated port and ``0'' otherwise according to the set $\underline{\mathcal{K}} = \mathcal{K}_1 \cup \dots \cup \mathcal{K}_{N_{\rm RF}}$, and $\mathbf{\Gamma}_{\rm m}$ represents the mutual coupling matrix as defined by
\begin{equation}
\mathbf{\Gamma}_{\rm m} = Z_T{(\mathbf{Z}+Z_T\mathbf{I})}^{-1},
\end{equation}
where $Z_T$ denotes the termination impedance with a typical value of $50~{\rm \Omega}$, and $\mathbf{Z}$ is the mutual impedance matrix. The exact expressions of the mutual impedance matrix $\mathbf{Z}$ relate to the antenna configuration and dipoles \cite{lee1988analysis,Singh2013MC}. Specifically, as the port density increases, the mutual impedance increases. Conversely, when the ports are not too close to each other, then we have $\mathbf{\Gamma}_{\rm m} \approx \mathbf{I}$, essentially without mutual coupling.

Mutual coupling does not affect the receiving procedures of PUMA. With mutual coupling, PUMA operates similarly by utilizing the received vector in \eqref{Eq:yMC} and considering the effective channel ${\bm{h}} = \tilde{\mathbf{W}}^{\dag}\mathbf{\Gamma}_{\rm m} \bm{g}^{(u,u)}$ for detection in \eqref{eq:DigCom}, where the mutual coupling matrix, $\mathbf{\Gamma}_{\rm m}$, can be pre-computed offline.

\section{Performance Analysis}\label{sec:Perf}
\subsection{Assumptions and Notations}
This section aims to analyze the performance of PUMA under rich scattering while discarding mutual coupling effects. Since the performance depends on the interference experienced by each UE, our primary focus will be on characterizing the SIR by deriving its PDF prior to conducting the rate analysis. To facilitate a tractable analysis, we consider a specific case of PUMA where the receiver has a single RF chain with all ports activated, i.e., $N_{\rm RF} = 1$, $\rho = 0$, and $N_{\max} = N$. 

As the UEs are i.i.d., it suffices to focus on the performance of any representative UE within the analysis. As such, the UE index can be omitted, and the SINR for a UE is given by
\begin{align}
{\rm SINR} &= \frac{\sigma_s^2 \left(\sum_{k=1}^{N} \lvert g^{(u,u)}_k\rvert \right)^2}{\sigma_s^2 \sum_{\tilde{u} = 1 \atop \tilde{u} \neq u}^U \left| \sum_{k=1}^{N} g^{(\tilde{u},u)}_k e^{-j\angle g_k^{(u,u)}} \right|^2 + N\sigma_\eta^2}, \nonumber\\
& \approx \frac{\left(\sum_{k=1}^{N} \lvert g_k^{(u,u)}\rvert \right)^2}{\sum_{\tilde{u} = 1 \atop \tilde{u} \neq u}^U \left|\sum_{k=1}^{N} g_k^{(\tilde{u},u)} e^{-j\angle g_k^{(u,u)}} \right|^2} = {\rm SIR}, 
\end{align}
where the approximation is accurate in the interference-limited scenario, i.e., $\sigma_s^2 \gg \sigma_\eta^2$ or $U\gg 2$. To simplify the notations, the following definitions are introduced:
\begin{align}
X & \triangleq \left(\sum_{k=1}^{N} \lvert g_k^{(u,u)}\rvert \right)^2,\\
S_{\tilde{u}} & \triangleq \sum_{k=1}^{N} g_k^{(\tilde{u},u)} e^{-j\angle g_k^{(u,u)}}, \\ 
Y & \triangleq \sum_{\tilde{u} = 1 \atop \tilde{u} \neq u}^U \lvert S_{\tilde{u}}\rvert^2.
\end{align}
The SIR is then expressed as ${\rm SIR} = X/Y$.

\subsection{Main Results}
Here, we first derive the PDF of the SIR, and then analyze the average symbol error probability and the network rate.

\begin{theorem}\label{Theo:GaussianSqrtX}
If $N$ is sufficiently large, $\sqrt{X} = \sum_{k = 1}^N \lvert g_k^{(u,u)}\rvert $ is approximately Gaussian with mean
\begin{equation}\label{Eq:meanSqrtX}
\mu_1 \triangleq \mathbb{E}[\sqrt{X}] = \frac{N\sqrt{\pi}}{2}\sigma_g, 
\end{equation}
and variance
\begin{align}\label{Eq:varSqrtX}
\sigma_1^2 &\triangleq \var[\sqrt{X}] \nonumber \\
&= N\left(1-\frac{\pi}{4}\right)\sigma_g^2 \nonumber \\
&\quad+ \frac{\pi\sigma_g^2}{2}\sum_{\ell=2}^{N} \! \sum_{k=1}^{\ell-1} \left[{}_2F_1\left(-\frac{1}{2},-\frac{1}{2};1;\rho_{k,\ell}^2\right) \!-\! 1\right], 
\end{align}
where ${}_2F_1(\cdot,\cdot;\cdot;\cdot)$ is the Gaussian hypergeometric function.
\end{theorem}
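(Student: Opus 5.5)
The plan is to treat $\sqrt{X}=\sum_{k=1}^N |g_k|$ (dropping the index $(u,u)$) as a sum of identically distributed but correlated Rayleigh variables, under the rich-scattering model in which $\bm{g}\sim\mathcal{CN}(\bm{0},\sigma_g^2\bm{\Sigma})$ with $[\bm{\Sigma}]_{k,\ell}=\rho_{k,\ell}$ real, given by \eqref{Eq:corr}. The argument has three parts: the mean, the variance, and the Gaussian approximation.

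\textbf{Mean and diagonal variance terms.} Each $|g_k|$ is Rayleigh with $\mathbb{E}[|g_k|^2]=\sigma_g^2$. Hence $\mathbb{E}|g_k|=\frac{\sqrt{\pi}}{2}\sigma_g$ and $\var|g_k|=(1-\frac{\pi}{4})\sigma_g^2$. Summing by linearity gives \eqref{Eq:meanSqrtX}. Expanding the variance,
\begin{equation*}
\var[\sqrt{X}]=\sum_k\var|g_k|+2\sum_{\ell=2}^N\sum_{k=1}^{\ell-1}\cov(|g_k|,|g_\ell|),
\end{equation*}
so the first term of \eqref{Eq:varSqrtX} follows at once. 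What remains is to compute the cross-moment $\mathbb{E}[|g_k||g_\ell|]$ for a pair of unit-variance-normalized circular complex Gaussians with correlation coefficient $\rho_{k,\ell}$.

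\textbf{Cross-moment via the bivariate Rayleigh law.} I will use the known joint density of two correlated Rayleigh envelopes $R_1=|g_k|$ and $R_2=|g_\ell|$, each with $\mathbb{E}[R_i^2]=\sigma_g^2$ and power correlation $\rho^2$. This density involves $I_0\!\left(\frac{2\rho r_1r_2}{\sigma_g^2(1-\rho^2)}\right)$. Alternatively, I can write $g_\ell=\rho g_k+\sqrt{1-\rho^2}\,n$ with $n$ independent. I will expand $I_0$ in its power series and integrate term by term; each term factorizes into Gamma integrals $\int r^{2m+2}e^{-r^2/(\sigma_g^2(1-\rho^2))}dr$. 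The resulting series is of the form
\begin{equation*}
\mathbb{E}[R_1R_2]=\sigma_g^2\,\Gamma(3/2)^2(1-\rho^2)^{2}\sum_m \frac{(3/2)_m^2}{(m!)^2}\rho^{2m}=\frac{\pi\sigma_g^2}{4}(1-\rho^2)^2\,{}_2F_1\!\left(\tfrac32,\tfrac32;1;\rho^2\right).
\end{equation*}
Euler's transformation ${}_2F_1(a,b;c;z)=(1-z)^{c-a-b}{}_2F_1(c-a,c-b;c;z)$ then yields $\frac{\pi\sigma_g^2}{4}{}_2F_1(-\frac12,-\frac12;1;\rho^2)$. Subtracting $(\mathbb{E}|g_k|)^2=\frac{\pi}{4}\sigma_g^2$ and multiplying by $2$ gives exactly the double-sum term of \eqref{Eq:varSqrtX}. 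Getting the normalization constants right in this series, and applying the Euler transformation correctly, is the main computational obstacle. As a check, setting $\rho=1$ should give ${}_2F_1(-\frac12,-\frac12;1;1)=\frac{4}{\pi}$, which reproduces $\mathbb{E}[R^2]=\sigma_g^2$.

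\textbf{Gaussianity.} Here I expect the main conceptual difficulty, because the summands are dependent and the classical i.i.d.\ CLT does not apply directly. I will invoke a CLT for dependent sequences. The correlation $\rho_{k,\ell}=j_0(\cdot)$ decays like the inverse of the port distance, so the envelope covariances, which behave roughly like $\rho^2$ for small $\rho$, are weakly dependent at large separation. A Lyapunov- or mixing-type CLT (for example, Stein's method for sums with decaying dependence) then justifies asymptotic normality as $N$ grows. Since the statement only claims an approximation, I will argue at this heuristic level: every summand has finite third moment, and the sum is dominated by many weakly correlated blocks. I would also note that the approximation would fail if all ports became fully correlated, so the claim implicitly requires $W_1,W_2$ large enough relative to the correlation length, or $N$ large with non-degenerate $\bm{\Sigma}$. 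The result is Gaussian with mean $\mu_1$ and variance $\sigma_1^2$ as computed above.
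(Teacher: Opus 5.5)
Your proposal is correct. It uses the same mean/variance decomposition as the paper, but justifies the two nontrivial ingredients differently.

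For the cross-moment, the paper simply quotes Middleton's result $\cov(|g_k|,|g_\ell|)=\frac{\pi}{4}\sigma_g^2[{}_2F_1(-\frac12,-\frac12;1;\rho_{k,\ell}^2)-1]$. You instead derive it. Expanding $I_0$ in the bivariate Rayleigh density and integrating term by term is legitimate, by monotone convergence, since all terms are nonnegative. It does give $\frac{\pi}{4}\sigma_g^2(1-\rho^2)^2\,{}_2F_1(\frac32,\frac32;1;\rho^2)$. Euler's transformation with $c-a-b=-2$ then turns this exactly into $\frac{\pi}{4}\sigma_g^2\,{}_2F_1(-\frac12,-\frac12;1;\rho^2)$, and your $\rho=1$ check via Gauss's summation is right. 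This makes the proof self-contained at the cost of some computation, whereas the paper's citation is shorter but a black box.

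For Gaussianity, the paper asserts that ports more than $\lambda/2$ apart are nearly independent. It then treats the sequence as $m$-dependent and invokes Billingsley's Theorem 27.4, using the finite twelfth moment. You instead use the actual decay of the $j_0$ kernel: the envelope covariance is $\approx\frac{\pi}{16}\sigma_g^2\rho^2\sim 1/d^2$. Your route is more faithful to the model, because $j_0$ does not vanish beyond $\lambda/2$, so the paper's $m$-dependence is itself only an approximation.

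If you want to make your route rigorous, the right tool is a Breuer--Major-type CLT for a Hermite-rank-2 functional of a stationary Gaussian field. It requires $\sum_d\rho(d)^2<\infty$. That holds when the array grows along one dimension, but is only borderline (logarithmically divergent) for a 2D array growing in both directions.

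Both arguments are heuristic at about the same level. Your caveat is the correct one: the quality of the approximation is governed by the aperture $W_1\times W_2$ relative to the correlation length, not by $N$ itself. At fixed aperture, densifying the ports makes $\sqrt{X}/N$ converge to a non-Gaussian integral of $|g|$ over the aperture. So phrase the condition in terms of aperture rather than a ``non-degenerate'' $\bm{\Sigma}$.
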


\begin{proof}
See Appendix \ref{app:proofTheoSqrtX}.
\end{proof}

Theorem \ref{Theo:GaussianSqrtX} indicates that the number of FAS ports at each UE enhances the performance by increasing the mean of desired channel gain in \eqref{Eq:meanSqrtX}. 
\begin{corollary}\label{Theo:pdfX}
The PDF of $X$ is given by
\begin{equation}\label{Eq:pdfX}
f_X(x) = \frac{1}{2\sigma_1^2} x^{-\frac{1}{4}} \mu_1^{\frac{1}{2}}e^{-\frac{x+\mu_1^2}{2\sigma_1^2}} I_{-\frac{1}{2}}\left(\frac{\mu_1}{\sigma_1^2}\sqrt{x}\right),~x\geq 0,
\end{equation}
where $I_r(u)$ denotes the modified Bessel function of the first kind and order $r$, $\mu_1 = \mathbb{E}[\sqrt{X}]$ is given by \eqref{Eq:meanSqrtX}, and $\sigma_1^2 = \var [\sqrt{X}]$ is given by \eqref{Eq:varSqrtX}.
\end{corollary}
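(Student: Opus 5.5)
The plan is to use Theorem~\ref{Theo:GaussianSqrtX} as the starting point and treat $X$ as the square of a Gaussian variable. The resulting law is a scaled noncentral chi-square with one degree of freedom. I then rewrite its density in the Bessel form of \eqref{Eq:pdfX}.

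Let $T \triangleq \sqrt{X}$. By Theorem~\ref{Theo:GaussianSqrtX}, $T \sim \mathcal{N}(\mu_1,\sigma_1^2)$ approximately. This Gaussian model formally puts some mass on $T<0$, but for large $N$ the ratio $\mu_1/\sigma_1$ grows, so that mass is negligible. I therefore adopt the standard convention of computing the law of $X=T^2$ with $T$ exactly Gaussian on $\mathbb{R}$.

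\textbf{Step 1: change of variables.} For $x\ge 0$, the CDF is
\begin{equation*}
F_X(x)=\Pr(-\sqrt{x}\le T\le \sqrt{x}).
\end{equation*}
Differentiating gives
\begin{equation*}
f_X(x)=\frac{1}{2\sqrt{x}}\left[f_T(\sqrt{x})+f_T(-\sqrt{x})\right].
\end{equation*}
Substituting the Gaussian density and expanding $(\pm\sqrt{x}-\mu_1)^2 = x+\mu_1^2 \mp 2\mu_1\sqrt{x}$, the common factor $e^{-(x+\mu_1^2)/(2\sigma_1^2)}$ comes out. What remains is
\begin{equation*}
f_X(x)=\frac{1}{\sigma_1\sqrt{2\pi x}}\,e^{-\frac{x+\mu_1^2}{2\sigma_1^2}}\cosh\!\left(\frac{\mu_1\sqrt{x}}{\sigma_1^2}\right).
\end{equation*}

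\textbf{Step 2: Bessel form.} I use the closed form of the half-order modified Bessel function,
\begin{equation*}
I_{-\frac{1}{2}}(z)=\sqrt{\frac{2}{\pi z}}\cosh z,
\end{equation*}
with $z=\mu_1\sqrt{x}/\sigma_1^2$. Then
\begin{equation*}
\cosh z=\sqrt{\frac{\pi z}{2}}\,I_{-\frac12}(z)=\sqrt{\frac{\pi}{2}}\;\mu_1^{\frac12}\,x^{\frac14}\,\sigma_1^{-1}\,I_{-\frac12}(z).
\end{equation*}
Inserting this into the Step 1 expression, the prefactor collapses as
\begin{equation*}
\frac{1}{\sigma_1\sqrt{2\pi}}\,x^{-\frac12}\sqrt{\frac{\pi}{2}}\,\mu_1^{\frac12}x^{\frac14}\sigma_1^{-1}=\frac{1}{2\sigma_1^2}\,x^{-\frac14}\mu_1^{\frac12}.
\end{equation*}
This is exactly \eqref{Eq:pdfX}, with $\mu_1$ and $\sigma_1^2$ given by \eqref{Eq:meanSqrtX} and \eqref{Eq:varSqrtX}.

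There is no real obstacle here; the only delicate point is the justification in the first paragraph, namely that $X$ is defined as a square so the folded-Gaussian mass at $T<0$ is harmless. The remaining work is bookkeeping of the constants in Step 2. As a consistency check, I can compare with the general noncentral chi-square density with $k=1$,
\begin{equation*}
\tfrac12 e^{-(x+\lambda)/2}(x/\lambda)^{(k-2)/4}I_{k/2-1}(\sqrt{\lambda x}),
\end{equation*}
after rescaling by $\sigma_1^2$ with $\lambda=\mu_1^2/\sigma_1^2$. This reproduces the same expression.
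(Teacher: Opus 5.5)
Your proof is correct and follows the paper's route. The paper simply invokes \cite[Lemma 1]{Wong2024cuma} for the density of the square of an $\mathcal{N}(\mu_1,\sigma_1^2)$ variable, and your change of variables with both branches $\pm\sqrt{x}$, combined with $I_{-1/2}(z)=\sqrt{2/(\pi z)}\cosh z$, is a self-contained derivation of that lemma with all constants checking out (including your convention of squaring a Gaussian supported on all of $\mathbb{R}$, which is exactly what the paper's expression encodes).
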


\begin{proof}
According to \cite[Lemma 1]{Wong2024cuma}, given $\sqrt{X} \sim \mathcal{N}(\mu_1, \sigma_1^2)$ from Theorem \ref{Theo:GaussianSqrtX}, the PDF of $X$ is given by \eqref{Eq:pdfX}.
\end{proof}

\begin{theorem}\label{Theo:GaussianSu}
If $N$ is sufficiently large, $S_{\tilde{u}}$ is approximately Gaussian with zero mean and 
\begin{equation}\label{Eq:varSu}
\sigma_2^2 \triangleq \var [S_{\tilde{u}}] = \sigma_g^2 N +\frac{\pi\sigma_g^2}{2} \sum_{\ell=2}^{N} \! \sum_{k=1}^{\ell-1} {}_2F_1\left(\frac{1}{2},\frac{1}{2};2;\rho_{k\ell}^2\right).
\end{equation}
\end{theorem}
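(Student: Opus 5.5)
The plan is to condition on the desired channel $\bm{g}^{(u,u)}$ and exploit the independence of the interfering channel $\bm{g}^{(\tilde{u},u)}$ from it, which the orthonormal precoding at the BS guarantees. Write $\theta_k \triangleq \angle g_k^{(u,u)}$ and $\bm{\phi} \triangleq [e^{j\theta_1},\dots,e^{j\theta_N}]^T$, so that $S_{\tilde{u}} = \bm{\phi}^\dag \bm{g}^{(\tilde{u},u)}$. In the rich-scattering model, $\bm{g}^{(\tilde{u},u)} \sim \mathcal{CN}(\bm{0},\sigma_g^2\bm{\Sigma})$ independently of $\bm{\phi}$. Hence, conditionally on $\bm{\phi}$, $S_{\tilde{u}}$ is exactly circularly-symmetric complex Gaussian with zero mean and conditional variance $\sigma_g^2\,\bm{\phi}^\dag\bm{\Sigma}\bm{\phi}$. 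The unconditional mean is therefore zero, and the unconditional variance is
\begin{equation*}
\var[S_{\tilde{u}}] = \sigma_g^2\,\mathbb{E}\big[\bm{\phi}^\dag\bm{\Sigma}\bm{\phi}\big] = \sigma_g^2\Big(N + 2\sum_{\ell=2}^{N}\sum_{k=1}^{\ell-1}\rho_{k,\ell}\,\Re\,\mathbb{E}\big[e^{j(\theta_\ell-\theta_k)}\big]\Big),
\end{equation*}
where the term $N$ comes from the diagonal entries $\rho_{k,k}=1$.

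The core computation is the phase-difference moment $\mathbb{E}[e^{j(\theta_\ell-\theta_k)}] = \mathbb{E}\big[g_\ell g_k^*/(|g_\ell||g_k|)\big]$ for a pair of zero-mean jointly complex Gaussian variables with real correlation coefficient $\rho_{k,\ell}$. I would compute it in one of two ways.
\begin{itemize}
\item Use the classical joint density of the two envelopes and the phase difference, as in Miller's bivariate Rayleigh/phase results, and integrate.
\item Write $g_\ell = \rho_{k,\ell} g_k + \sqrt{1-\rho_{k,\ell}^2}\,w$ with $w$ independent of $g_k$. Then integrate out the phase of $w$ to obtain an expression in terms of the Rice phase distribution, and finally average over $|g_k|$.
\end{itemize}
Either route should give the standard result $\mathbb{E}[e^{j(\theta_\ell-\theta_k)}] = \frac{\pi}{4}\rho_{k,\ell}\,{}_2F_1\big(\tfrac12,\tfrac12;2;\rho_{k,\ell}^2\big)$. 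This parallels the evaluation of $\mathbb{E}[|g_k||g_\ell|]$ through ${}_2F_1(-\tfrac12,-\tfrac12;1;\rho^2)$ in Theorem~\ref{Theo:GaussianSqrtX}, and I would reuse that appendix's series-expansion technique: expand the joint density in powers of $\rho$, integrate term by term using Gamma-function identities, and recognise the hypergeometric series. Substituting this moment gives the double-sum term with prefactor $\frac{\pi\sigma_g^2}{2}$. I expect this moment computation to be the main obstacle.

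Carried through exactly, the substitution makes each summand $\rho_{k,\ell}^2\,{}_2F_1(\frac12,\frac12;2;\rho_{k,\ell}^2)$, because one factor $\rho_{k,\ell}$ comes from $\bm{\Sigma}$ and one from the phase moment. This does not match \eqref{Eq:varSu} as printed, which has no $\rho_{k,\ell}^2$ weight in front of the hypergeometric function. I would therefore track this factor carefully and check it numerically against Monte Carlo simulation, since the two expressions differ. As a sanity check, $\rho=1$ gives ${}_2F_1(\frac12,\frac12;2;1)=4/\pi$, so each pair contributes $2\sigma_g^2$, and the variance becomes $N^2\sigma_g^2$ for fully correlated ports, as it should.

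For approximate Gaussianity, the conditional law of $S_{\tilde{u}}$ is already Gaussian. The unconditional law is a variance mixture, so it suffices to show that $\bm{\phi}^\dag\bm{\Sigma}\bm{\phi}/\mathbb{E}[\bm{\phi}^\dag\bm{\Sigma}\bm{\phi}]$ concentrates at $1$ as $N$ grows. I would argue this by a weak law of large numbers for the weakly dependent phases, using the decay of $j_0(\cdot)$ with port separation. Alternatively, one can invoke a central limit theorem for sums of dependent bounded terms, in the same manner as in the proof of Theorem~\ref{Theo:GaussianSqrtX}.
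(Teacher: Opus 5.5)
Your mean and variance computation is the paper's own argument in different form. Appendix~\ref{app:proofTheoSu} factors $\cov(\tilde h_k^{(\tilde u)},\tilde h_\ell^{(\tilde u)})=\mathbb{E}[h_k^{(\tilde u)}(h_\ell^{(\tilde u)})^*]\,\mathbb{E}[\tilde g_k\tilde g_\ell^*]$ using independence of the interfering and desired channels. That is exactly your term $\sigma_g^2\rho_{k\ell}\,\mathbb{E}[e^{j(\theta_\ell-\theta_k)}]$ after conditioning on $\bm{\phi}$. The paper takes the phase-difference moment $\frac{\pi}{4}\rho_{k\ell}\,{}_2F_1(\frac12,\frac12;2;\rho_{k\ell}^2)$ from Middleton rather than deriving it, so your Rician-phase route would supply a step the paper only cites.

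You are right about the missing weight. The paper's own \eqref{Eq:covterminSu} carries $\rho_{k\ell}^2$, and that factor is dropped when substituting into \eqref{Eq:varSu}. The printed statement is therefore a typo and cannot be proved as written. The intended result is $\sigma_2^2=\sigma_g^2N+\frac{\pi\sigma_g^2}{2}\sum_{\ell=2}^{N}\sum_{k=1}^{\ell-1}\rho_{k\ell}^2\,{}_2F_1(\frac12,\frac12;2;\rho_{k\ell}^2)$. Your $\rho=1$ check cannot tell the two versions apart, since they coincide there. The decisive check is $\rho_{k\ell}=0$: independent ports must give $N\sigma_g^2$, whereas the printed formula adds $\frac{\pi}{4}N(N-1)\sigma_g^2$.

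Where you genuinely depart from the paper is the Gaussianity step. The paper reuses the $m$-dependent Lyapunov CLT of Appendix~\ref{app:proofTheoSqrtX} on the dependent summands $\tilde h_k^{(\tilde u)}$. You instead observe that, given $\bm{\phi}$, $S_{\tilde u}$ is exactly $\mathcal{CN}(0,\sigma_g^2\bm{\phi}^\dag\bm{\Sigma}\bm{\phi})$, so only concentration of one scalar quadratic form is needed. This is sharper and also buys more. It shows that all interferers share the same random conditional variance, so the $S_{\tilde u}$ are conditionally i.i.d.\ but unconditionally dependent. That same concentration is precisely what justifies the i.i.d.\ exponential/Gamma model of Corollary~\ref{Theo:pdfYY}, which the paper simply asserts.

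One caveat applies to both routes: what must grow is the number of effectively uncorrelated ports, i.e.\ the aperture in wavelengths, not $N$ itself. If $N\to\infty$ by densifying ports in a fixed aperture, $\bm{\phi}^\dag\bm{\Sigma}\bm{\phi}/N^2$ converges to a generally non-constant random limit. Then $S_{\tilde u}$ remains a Gaussian scale mixture rather than a Gaussian. Your law of large numbers should therefore be stated in terms of aperture growth, as the paper's ``$W$ sufficiently large'' implicitly does.
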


\begin{proof}
See Appendix \ref{app:proofTheoSu}.
\end{proof}

\begin{corollary}\label{Theo:pdfYY}
The variable $\tilde{Y} = Y/\sigma_2^2$ follows the Gamma distribution: $\tilde{Y}\sim \Gamma(U-1,1)$, and its PDF is given by
\begin{equation}\label{Eq:pdfYY}
f_{\tilde{Y}}(\tilde{y}) = \frac{\tilde{y}^{U-2}e^{-\tilde{y}}}{\Gamma(U-1)},
\end{equation}
where $\Gamma(\cdot)$ is the Gamma function.
\end{corollary}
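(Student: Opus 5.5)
By Theorem~\ref{Theo:GaussianSu}, each $S_{\tilde{u}}$ is approximately a zero-mean Gaussian with variance $\sigma_2^2$. The plan is to show that it is in fact circularly-symmetric complex Gaussian, that the $U-1$ interferers give independent terms, and then to read off the Gamma law as a sum of independent exponentials.

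\emph{Step 1: circular symmetry.} Under rich scattering, $\bm{g}^{(\tilde{u},u)}\sim\mathcal{CN}(\bm{0},\sigma_g^2\bm{\Sigma})$, so $e^{j\theta}\bm{g}^{(\tilde{u},u)}$ has the same law as $\bm{g}^{(\tilde{u},u)}$ for every deterministic $\theta$. The effective channels are independent across users because the precoders are orthonormal. Hence $\bm{g}^{(\tilde{u},u)}$ is independent of $\bm{g}^{(u,u)}$, and therefore of the phases $\{\angle g_k^{(u,u)}\}$. Conditioned on these phases, $S_{\tilde{u}}$ is a fixed linear combination of a circular Gaussian vector, so it is exactly $\mathcal{CN}$ with a conditional variance. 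Theorem~\ref{Theo:GaussianSu} supplies the Gaussian approximation with variance $\sigma_2^2$ after averaging over the phases. Consequently $\mathbb{E}[S_{\tilde{u}}^2]=0$: the real and imaginary parts of $S_{\tilde{u}}$ are uncorrelated, each with variance $\sigma_2^2/2$. Then $2|S_{\tilde{u}}|^2/\sigma_2^2\sim\chi^2_2$, i.e.\ $|S_{\tilde{u}}|^2/\sigma_2^2\sim\mathrm{Exp}(1)=\Gamma(1,1)$.

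\emph{Step 2: independence across interferers.} For distinct $\tilde{u}$, the channels $\bm{g}^{(\tilde{u},u)}$ are mutually independent and independent of $\bm{g}^{(u,u)}$. Conditioned on the desired-channel phases, the $S_{\tilde{u}}$ are therefore independent. Within the Gaussian approximation, their conditional variances concentrate around $\sigma_2^2$ for large $N$, since the variance is a sum of $O(N^2)$ weakly dependent terms averaged over phases. I will therefore treat the $\{|S_{\tilde{u}}|^2/\sigma_2^2\}$ as i.i.d.\ $\mathrm{Exp}(1)$.

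\emph{Step 3: Gamma law.} Summing $U-1$ i.i.d.\ $\Gamma(1,1)$ variables gives $\tilde{Y}=Y/\sigma_2^2\sim\Gamma(U-1,1)$. This follows either from the product of moment-generating functions $(1-t)^{-(U-1)}$ or by induction with convolution. The resulting density is \eqref{Eq:pdfYY}.

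The main obstacle is Step 2. Strictly, all the $S_{\tilde{u}}$ share the common random phases $e^{-j\angle g_k^{(u,u)}}$. They are therefore only conditionally independent, and unconditionally they form a mixture. I would argue that the conditional variance $\sigma_g^2\sum_{k,\ell}\rho_{k,\ell}e^{-j(\angle g_k-\angle g_\ell)}$ is close to its mean $\sigma_2^2$ for large $N$, justifying the approximation. This is in the same spirit as the approximation already adopted in Theorem~\ref{Theo:GaussianSu}.
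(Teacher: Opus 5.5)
Your proposal follows the paper's proof: Theorem~\ref{Theo:GaussianSu} makes $S_{\tilde{u}}/\sigma_2$ standard complex normal, so $|S_{\tilde{u}}|^2/\sigma_2^2\sim\mathrm{Exp}(1)$, and a sum of $U-1$ i.i.d.\ exponentials is $\Gamma(U-1,1)$. The paper simply asserts circularity and the i.i.d.\ property, so your conditional-Gaussian argument and your observation that the $S_{\tilde{u}}$ are only conditionally independent given the shared phases $\angle g_k^{(u,u)}$ are a more careful version of the same approximation; one caveat is that the concentration of the conditional variance is governed by the electrical aperture (the effective number of spatial degrees of freedom, growing with $W_1W_2$), not by $N$ alone, since for a fixed aperture the terms become strongly rather than weakly dependent as $N$ grows.
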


\begin{proof}
According to Theorem \ref{Theo:GaussianSu}, $\tilde{S}_{\tilde{u}} = S_{\tilde{u}}/\sigma_2$ is a standard complex normal variable. Thus, ${\rvert\tilde{S}_{\tilde{u}}\lvert}^2$ follows an exponential distribution with mean $1$. Also, $\tilde{Y} = \sum_{\tilde{u}=1\atop \tilde{u}\neq u}^{U} {\rvert\tilde{S}_{\tilde{u}}\lvert}^2$ is the sum of $(U-1)$ i.i.d.~exponential random variables, which follows a Gamma distribution with a shape parameter of $(U-1)$ and a scale parameter of $1$, as described by the PDF in \eqref{Eq:pdfYY}.
\end{proof}

Corollary~\ref{Theo:pdfYY} indicates that the interference escalates with the number of UEs, as the expectation $\mathbb{E}[\tilde{Y}] = U-1$ increases concomitantly with $U$. An increase in $U$ will consequently lead to a decline in performance for each UE. 

\begin{theorem}\label{Theo:pdfZ}
The PDF of $Z = X/\tilde{Y}$ is given by \eqref{Eq:pdfZ}, shown on the top of next page, where $\mu_1$ is given by \eqref{Eq:meanSqrtX}, $\sigma_1^2$ is given by \eqref{Eq:varSqrtX}, and $\mathcal{M}_{a,b}(t)$ represents the Whittaker $M$ function. 
\end{theorem}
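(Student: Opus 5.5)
The plan is to obtain the PDF of $Z = X/\tilde{Y}$ from the ratio-distribution formula. We take $X$ and $\tilde{Y}$ to be independent: $X$ depends only on the desired channel magnitudes $\{|g_k^{(u,u)}|\}$. The interference terms $S_{\tilde{u}}$ are built from the independent interfering channels $g^{(\tilde{u},u)}$, rotated by the desired-channel phases. Conditioned on $\bm{g}^{(u,u)}$, Theorem~\ref{Theo:GaussianSu} gives an approximately Gaussian law for $S_{\tilde{u}}$. We use the same law unconditionally, treating it as not depending on the desired channel. With this, we write
\begin{equation*}
f_Z(z) = \int_0^\infty \tilde{y}\, f_X(z\tilde{y})\, f_{\tilde{Y}}(\tilde{y})\, d\tilde{y},
\end{equation*}
where $f_X$ is taken from Corollary~\ref{Theo:pdfX} and $f_{\tilde{Y}}$ from Corollary~\ref{Theo:pdfYY}.

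Substituting both densities, the integrand is proportional to
\begin{equation*}
\tilde{y}^{U-1}(z\tilde{y})^{-1/4} e^{-\tilde{y}\left(1+\frac{z}{2\sigma_1^2}\right)} I_{-\frac{1}{2}}\left(\frac{\mu_1}{\sigma_1^2}\sqrt{z\tilde{y}}\right),
\end{equation*}
with the prefactor $\frac{\mu_1^{1/2}}{2\sigma_1^2\Gamma(U-1)}e^{-\mu_1^2/(2\sigma_1^2)}$. Next we change variables to $t=\sqrt{\tilde{y}}$, so that $d\tilde{y} = 2t\,dt$. This turns the integral into the standard form
\begin{equation*}
\int_0^\infty t^{\nu'} e^{-p t^2} I_{\nu}(ct)\,dt,
\end{equation*}
with $\nu=-\tfrac12$, $p = 1+\frac{z}{2\sigma_1^2}$, $c = \frac{\mu_1\sqrt{z}}{\sigma_1^2}$, and $\nu' = 2U-\tfrac32$.

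This integral has a known closed form (Gradshteyn--Ryzhik 6.643.2), stated in terms of the Whittaker $M$ function. For $\int_0^\infty x^{\mu-1/2}e^{-\alpha x}I_{2\nu}(2\beta\sqrt{x})\,dx$ it gives
\begin{equation*}
\frac{\Gamma(\mu+\nu+\frac12)}{\beta\,\Gamma(2\nu+1)}\, e^{\beta^2/(2\alpha)}\alpha^{-\mu}\mathcal{M}_{-\mu,\nu}\!\left(\frac{\beta^2}{\alpha}\right).
\end{equation*}
It is in fact simplest to apply this formula directly in the $\tilde{y}$ variable, without the substitution above. We identify $2\nu=-\tfrac12$, so $\nu=-\tfrac14$. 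Matching powers, $\tilde{y}^{U-1-1/4}=\tilde{y}^{\mu-1/2}$ gives $\mu = U-\tfrac34$. The remaining parameters are $\alpha = 1+\frac{z}{2\sigma_1^2}$ and $\beta = \frac{\mu_1\sqrt{z}}{2\sigma_1^2}$.

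Collecting factors then yields
\begin{equation*}
f_Z(z) \propto \frac{\Gamma(U-\frac12)}{\Gamma(U-1)\Gamma(\frac12)}\, z^{-1/4}\,\beta^{-1}\alpha^{-(U-\frac34)}\, e^{-\frac{\mu_1^2}{2\sigma_1^2}+\frac{\beta^2}{2\alpha}}\, \mathcal{M}_{-(U-\frac34),-\frac14}\!\left(\frac{\beta^2}{\alpha}\right).
\end{equation*}
Here $\beta^2/\alpha = \frac{\mu_1^2 z}{2\sigma_1^2(2\sigma_1^2+z)}$. Simplifying the powers of $z$ and $\sigma_1$ should reproduce \eqref{Eq:pdfZ}.

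The main obstacle is verifying the convergence and validity conditions of the tabulated integral. The formula needs $\mathrm{Re}(\mu+\nu+\tfrac12)>0$, which holds here since $U-\tfrac12>0$. The negative Bessel order $-\tfrac12$ needs care, because $I_{-1/2}$ is not the $I_{|\nu|}$ branch. As a cross-check, we use $I_{-1/2}(x)=\sqrt{2/(\pi x)}\cosh x$. This lets us instead evaluate two Gaussian-type integrals of the form $\int \tilde{y}^{U-2}\cdots e^{-\alpha\tilde{y}\pm 2\beta\sqrt{\tilde{y}}}\,d\tilde{y}$, which give parabolic-cylinder or Kummer functions. We would then confirm that they match the Whittaker form through $\mathcal{M}_{\kappa,\mu}(x)=e^{-x/2}x^{\mu+1/2}M(\mu-\kappa+\tfrac12,1+2\mu,x)$.

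The remaining work is bookkeeping of constants, and we would check it by confirming that $\int_0^\infty f_Z = 1$ numerically.
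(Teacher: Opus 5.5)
Your proposal is correct and is essentially the paper's own proof: the ratio-density integral $\int_0^\infty \tilde y\, f_X(z\tilde y)\, f_{\tilde Y}(\tilde y)\,d\tilde y$, followed by Gradshteyn--Ryzhik 6.643.2 with the same identifications $\mu=U-\tfrac34$, $\nu=-\tfrac14$, $\alpha=1+\frac{z}{2\sigma_1^2}$, $\beta=\frac{\mu_1\sqrt z}{2\sigma_1^2}$. The $t=\sqrt{\tilde y}$ detour and the $\cosh$ cross-check are unnecessary, because the tabulated formula follows termwise from the power series of $I_{2\nu}$ for any order with $\mathrm{Re}(\mu+\nu+\tfrac12)>0$, including $2\nu=-\tfrac12$; and when you finish the bookkeeping, the prefactor becomes $\mu_1^{-1/2}z^{-3/4}$ while $-\frac{\mu_1^2}{2\sigma_1^2}+\frac{\beta^2}{2\alpha}$ simplifies to $-\frac{\mu_1^2(4\sigma_1^2+z)}{4\sigma_1^2(2\sigma_1^2+z)}$, so you recover \eqref{Eq:pdfZ} except that its printed exponent numerator ``$4\sigma_1^2+4$'' is a typo for $4\sigma_1^2+z$.
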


\begin{figure*}
\begin{equation}\label{Eq:pdfZ}
f_{Z}(z) = \frac{\Gamma(U-\frac{1}{2})}{\Gamma(U-1)\Gamma(\frac{1}{2})} z^{-\frac{3}{4}}\mu_1^{-\frac{1}{2}} e^{-\frac{\mu_1^2(4\sigma_1^2+4)}{4\sigma_1^2(2\sigma_1^2+z)}} {\left(\frac{z}{2\sigma_1^2}+1\right)}^{-U+\frac{3}{4}} \mathcal{M}_{-U+\frac{3}{4},-\frac{1}{4}}\left(\frac{\mu_1^2 z}{2\sigma_1^2 (2\sigma_1^2+z)}\right),~z \geq 0,
\end{equation}
\hrulefill
\end{figure*}

\begin{proof}
See Appendix \ref{app:proofTheopdfZ}.
\end{proof}

\begin{lemma}
The average bit error rate (BER) of a typical PUMA UE using binary phase shift keying (BPSK) is given by
\begin{equation}\label{Eq:ABERBPSK}
{p}_{e,{\rm BPSK}} = \int_0^\infty Q\left(\sqrt{\frac{2z}{\sigma_2^2}}\right)f_Z(z)dz,
\end{equation}
whereas the BER for $M$-ary quadrature amplitude modulation (QAM) is given by
\begin{equation}\label{Eq:ABERQAM}
{p}_{e,{\rm QAM}} \!=\! \frac{4}{m} \! \left(1\!-\!\frac{1}{\sqrt{M}}\right) \!\int_0^\infty\!\! Q\left(\sqrt{\frac{3z}{(M-1)\sigma_2^2}}\right)f_Z(z)dz,
\end{equation}
where $m = \log_2 M$, $Q(\cdot)$ represents the Gaussian Q-function, and $\sigma_2^2$ is given by \eqref{Eq:varSu}.
\end{lemma}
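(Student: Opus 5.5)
The plan is to condition on the instantaneous SIR and then average over its distribution from Theorem~\ref{Theo:pdfZ}. Under the Gaussian approximations of Theorems~\ref{Theo:GaussianSqrtX} and~\ref{Theo:GaussianSu}, I will treat the aggregated interference-plus-noise term at the output of the single RF chain as Gaussian. This is justified because each $S_{\tilde{u}}$ is approximately zero-mean complex Gaussian, and the $U-1$ interferers are independent.

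Conditional on the desired-channel and interference gains, the detector sees an effective AWGN channel. The effective SNR is therefore the instantaneous SIR, which equals $X/Y = X/(\sigma_2^2\tilde{Y}) = Z/\sigma_2^2$.

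With this conditional SNR in hand, I will apply the standard AWGN error-probability formulas.

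\textbf{BPSK.} The conditional BER is $Q(\sqrt{2\gamma})$ with $\gamma = z/\sigma_2^2$. This gives $Q\bigl(\sqrt{2z/\sigma_2^2}\bigr)$.

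\textbf{Square $M$-QAM with Gray coding.} I will use the usual nearest-neighbour approximation for the conditional BER,
\begin{equation*}
\frac{4}{m}\left(1-\frac{1}{\sqrt{M}}\right) Q\left(\sqrt{\frac{3\gamma}{M-1}}\right),
\end{equation*}
again with $\gamma = z/\sigma_2^2$. It follows from decomposing the constellation into two $\sqrt{M}$-PAM components and dividing the symbol error by $m=\log_2 M$ bits.

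Finally, I will average each conditional expression over $f_Z(z)$ on $[0,\infty)$ via the law of total expectation. This yields \eqref{Eq:ABERBPSK} and \eqref{Eq:ABERQAM}.

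The main obstacle is justifying the conditional-Gaussian step. The interference term $\sum_{\tilde{u}} S_{\tilde{u}} s_{\tilde{u}}$ is not exactly Gaussian given $Y$: it depends on the interferer symbols $s_{\tilde{u}}$, and $Y$ is correlated with the individual $S_{\tilde{u}}$. I would handle this in two ways. First, I would appeal to the large-$N$ central-limit argument already used in Theorem~\ref{Theo:GaussianSu}. Second, I would use the standard Gaussian-interference approximation for a large number of co-channel users, under which the symbols can be absorbed into the Gaussian $S_{\tilde{u}}$ because $|s_{\tilde{u}}|$ has unit average energy.

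I would also check the normalisation carefully, confirming that $Z/\sigma_2^2$, and not $Z$ itself, is the SIR. This follows directly from $\tilde{Y} = Y/\sigma_2^2$ in Corollary~\ref{Theo:pdfYY}.
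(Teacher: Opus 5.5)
Your proposal is correct and follows essentially the same route as the paper: treat the single-RF-chain output as $y=\sqrt{X}s+\tilde{\eta}$ with effective Gaussian noise of variance $Y\sigma_s^2$, substitute the conditional SIR $X/Y=Z/\sigma_2^2$ (from $\tilde{Y}=Y/\sigma_2^2$) into the standard approximate BPSK and $M$-QAM BER expressions, and average over $f_Z(z)$. Your discussion of why the interference given $Y$ is only approximately Gaussian is a justification the paper does not give; it simply absorbs this into its approximation step (a).
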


\begin{proof}
With PUMA, the received signal at a typical UE can be modeled as
\begin{equation}
y = \sqrt{X}s+\tilde{\eta},
\end{equation}
where $\tilde{\eta}$ is the resulting noise with zero mean and variance of $Y\sigma_s^2$. Consequently, the BER for this specific channel employing BPSK is determined as
\begin{equation}\label{Eq:BERBPSK}
p_e(X,Y) \overset{(a)}{\approx} Q\left(\sqrt{\frac{2X\sigma_s^2}{Y\sigma_s^2}}\right) \overset{(b)}{=} Q\left(\sqrt{\frac{2Z}{\sigma_2^2}}\right),
\end{equation}
and that using $M$-QAM is given by
\begin{align}\label{Eq:BERQAM}
p_e(X,Y) & \overset{(a)}{\approx} \frac{4}{m} \left(1-\frac{1}{\sqrt{M}}\right) Q\left(\sqrt{\frac{3X\sigma_s^2}{(M-1)Y\sigma_s^2}}\right)\nonumber\\
& \overset{(b)}{=}\frac{4}{m} \left(1-\frac{1}{\sqrt{M}}\right) Q\left(\sqrt{\frac{3Z}{(M-1)\sigma_2^2}}\right),
\end{align}
where $(a)$ employs the approximated BER expression for BPSK and $M$-QAM \cite{khan2014anew}, and $(b)$ adopts the definition presented in Corollary \ref{Theo:pdfYY}. Finally, the average BER in \eqref{Eq:ABERBPSK} or \eqref{Eq:ABERQAM} is derived by averaging \eqref{Eq:BERBPSK} or \eqref{Eq:BERQAM} over the PDF specified in \eqref{Eq:pdfZ}, which completes the proof.
\end{proof}

Based on the derived results, the PUMA network can be assessed using several key performance indicators comparable to those outlined in \cite[Corollaries 2--4]{Wong2024cuma}. First, the data rate for binary symmetric channels is given by 
\begin{equation}\label{Eq:Rate}
R = U\left[1+p_e\log_2 p_e +(1-p_e)\log_2 (1-p_e)\right],
\end{equation} 
where $p_e$ is given by \eqref{Eq:ABERBPSK} or \eqref{Eq:ABERQAM}, depending on the modulation constellation employed in the system. Secondly, the ergodic rate can be determined as 
\begin{equation}
C_e = U\int_0^\infty \log_2 \left(1+\frac{z}{\sigma_2^2}\right)f_Z{(z)}dz,
\end{equation}
where $\sigma_2^2$ is specified in \eqref{Eq:varSu}. Finally, the $\varepsilon$-outage rate can be obtained by
\begin{equation}
C_{\rm out} = U \log_2(1+\gamma_{\rm th}),
\end{equation}
where $\gamma_{\rm th}$ satisfies
\begin{equation}
\int_0^{\sigma_2^2\gamma_{\rm th}} f_Z(z)dz = \varepsilon.
\end{equation}

\section{Simulation Results}\label{Sec:Sim}
\begin{table}[t]
\begin{center}
\vspace{-2mm}
\caption{Simulation Parameters}\label{Tab:SimPara}
\resizebox{.9\columnwidth}{!}{
\begin{tabular}{l|l|l|l}
        \hline
        \multicolumn{2}{l|}{\textbf{Parameter}}     & \multicolumn{2}{l}{\textbf{Value}} \\ \hline\hline
        \multicolumn{2}{l|}{Carrier frequency, $f_c$}& {$6 ~{\rm GHz}$}& {$26 ~{\rm GHz}$} \\ \hline
        \multicolumn{2}{l|}{Wavelength, $\lambda$}   & {$5 ~{\rm cm}$}  & {$1.15 ~{\rm cm}$} \\ \hline
        \multicolumn{2}{l|}{FAS size, $W$}          & {$3\lambda \times 1.6 \lambda$}&  {$13\lambda \times 7\lambda$} \\ \hline
        \multicolumn{2}{l|}{Average SNR, $\Gamma = \sigma_g^2\sigma_s^2/\sigma_\eta^2$}& \multicolumn{2}{l}{$50~{\rm dB}$}\\ \hline
        \multicolumn{2}{l|}{Demapping scheme}       & \multicolumn{2}{l}{Max-Log-MAP} \\ \hline
        \multicolumn{2}{l|}{Decoding scheme}        & \multicolumn{2}{l}{Min-Sum}      \\ \hline
        \multirow{2}{*}{\makecell{Finite-scattering \\ Channels}} & Rice factor, $K$    & {$0$}   & {$7$} \\ \cline{2-4}
        & No. of NLOS paths, $N_p$& {$50$}        & {$2$} \\ \hline
        \multirow{2}{*}{Dipole} & Length & \multicolumn{2}{l}{$0.5\lambda$}\\ \cline{2-4}
        & Width &   \multicolumn{2}{l}{$0.005\lambda$} \\ \hline\hline
    \end{tabular}}
\end{center}
\end{table}

\begin{table}
\begin{center}
\begin{threeparttable}
\caption{Compactness or Port Density of FAS ($N_1\times N_2$)}\label{Tab:FASComp}
\begin{tabular}{c|c|c|c}
    \hline
    \multicolumn{2}{c|}{ } & $6~{\rm GHz}$ & $26~{\rm GHz}$ \\ \hline
    \multicolumn{2}{r|}{FAS Size, $W$}& $3\lambda \times 1.6\lambda$    & $13\lambda \times 7\lambda$\\ \hline\hline
    \multirow{4}{*}{Compactness}& 
      Sparse\tnote{\ddag}       & $4\times 4$   & $14\times 15$ \\ \cline{2-4}
    & Not compact\tnote{\ddag}  & $7\times 4$   & $27\times 15$ \\ \cline{2-4}
    & Compact\tnote{\ddag}      & $16\times 4$  & $66\times 15$\\ \cline{2-4}
    & Very compact\tnote{\ddag} & $61\times 4$  & $261\times 15$\\ \hline
\end{tabular}
\begin{tablenotes}
\item[\ddag] The sparse case has a minimum spacing of $\lambda$. The not-compact case has a minimum spacing of $0.5\lambda$. The compact case has a minimum spacing of $0.2\lambda$. The very compact case has a minimum spacing of $0.05\lambda$. 
\end{tablenotes}
\end{threeparttable}
\end{center}
\end{table}

This section presents simulation results aimed at evaluating the performance of PUMA. The parameters used in the simulations are detailed in Table~\ref{Tab:SimPara}. The actual dimensions of the 2D-FAS at each UT are set to $15~{\rm cm}\times 8~{\rm cm}$, approximating the size of a typical mobile handset. In Table \ref{Tab:FASComp}, we detail the FAS configurations considered. The case at $6~{\rm GHz}$ frequency corresponds to the 5G Mid-band with a bandwidth of $10~{\rm MHz}$, whereas the $26~{\rm GHz}$ case pertains to the millimeter-wave band with a bandwidth of $50~{\rm MHz}$. We consider interference-limited scenarios where the average SNR $\Gamma = \sigma_g^2\sigma_s^2/\sigma_\eta^2$ is set to be $50~{\rm dB}$. We focus on the case with a known channel vector $\bm{g}^{(u,u)}$ for the typical UE $u$ and unknown channel vectors for interferers, $\boldsymbol{g}^{(\tilde{u},u)}$ ($\forall \tilde{u} \neq u$), when performing PUMA.

\subsection{Theoretical Performance under Rich Scattering}
Here, we consider rich scattering scenarios without mutual coupling effects and utilize the analytical results to evaluate the performance of PUMA. Fig.~\ref{Fig:PDF} shows the PDFs from Monte Carlo simulations and the analytical expression in \eqref{Eq:pdfZ} for various combinations of the system parameters. The results confirm the correctness and accuracy of the analytical results.

\begin{figure}[]
\centering
\includegraphics[width = \linewidth]{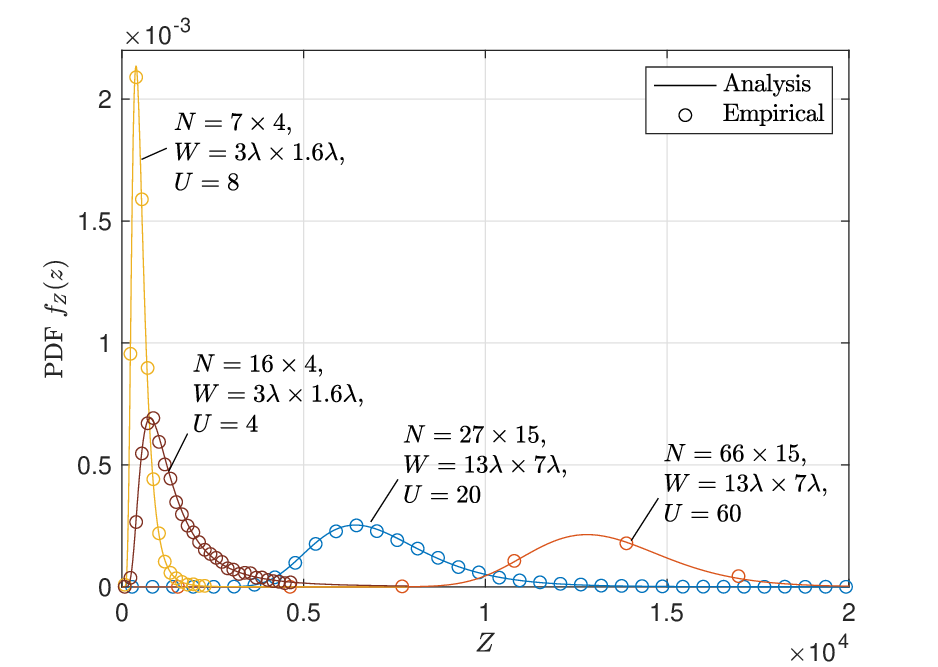}
\caption{Empirical and analytical results of PDFs for $Z$ in PUMA.}\label{Fig:PDF}
\end{figure}

Fig.~\ref{Fig:TheoRvsU} illustrates the average data rate of PUMA using QPSK. Specifically, Fig.~\ref{subfig:TheoRvsU_6GHz} presents the results at $6~{\rm GHz}$, whereas Fig. \ref{subfig:TheoRvsU_26GHz} pertains to $26~{\rm GHz}$. The average data rate is calculated as in \eqref{Eq:Rate} with an analytical expression in \eqref{Eq:ABERQAM} or through Monte Carlo simulations. These empirical simulation results confirm the precision of the data rate analysis.

\begin{figure}
\centering
\subfigure[$f_c = 6~{\rm GHz}$]{\includegraphics[width = 0.98\linewidth]{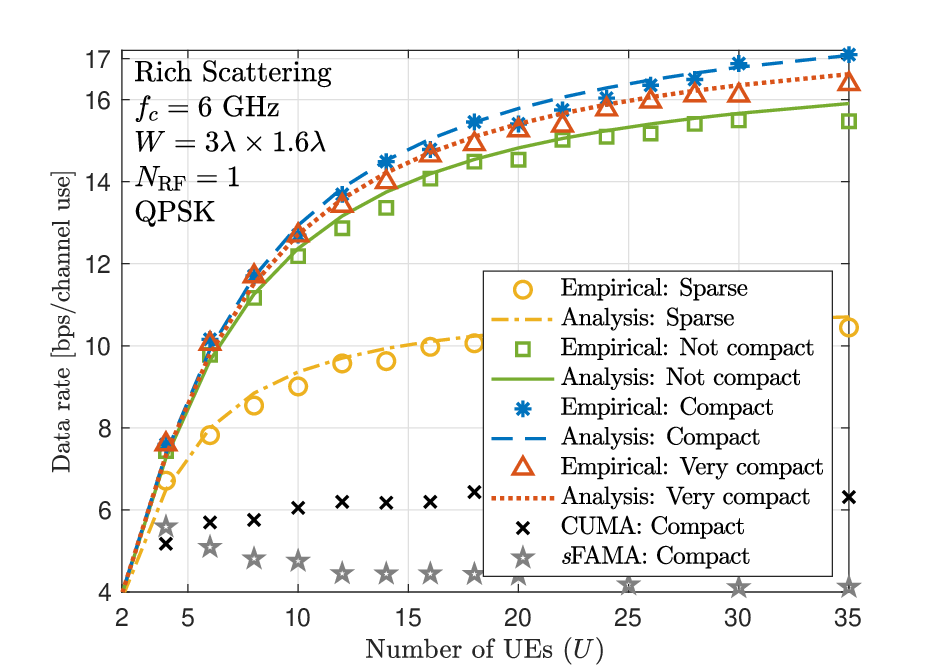}\label{subfig:TheoRvsU_6GHz}}\\
\subfigure[$f_c = 26~{\rm GHz}$]{\includegraphics[width = 0.98\linewidth]{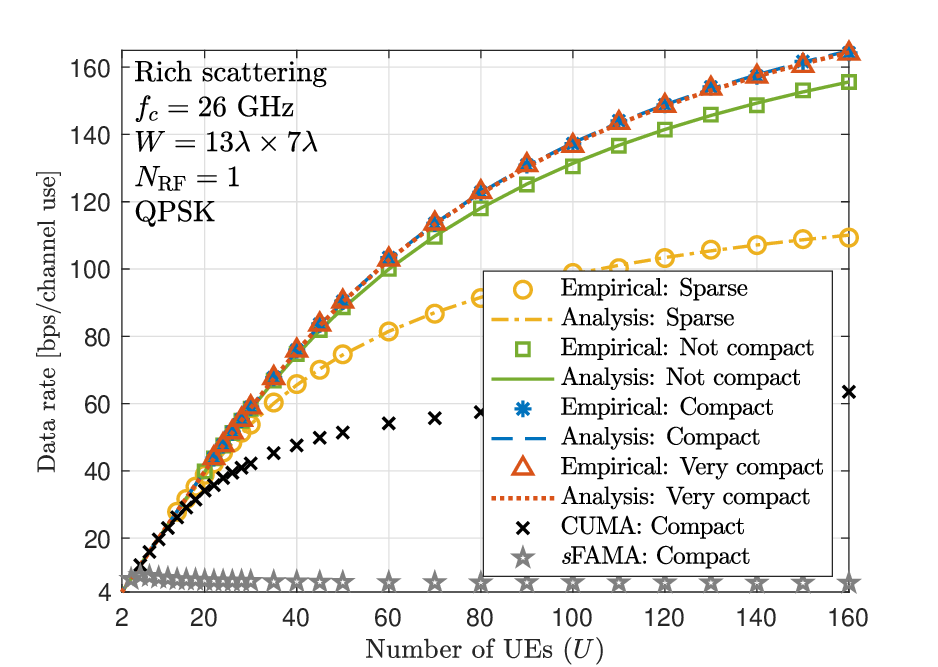}\label{subfig:TheoRvsU_26GHz}}
\caption{Average data rate performance of PUMA with QPSK modulation in the absense of mutual coupling and under rich scattering.}\label{Fig:TheoRvsU}
\end{figure}

At $6~{\rm GHz}$, as illustrated in Fig.~\ref{subfig:TheoRvsU_6GHz}, PUMA is capable of achieving only a relatively low data rate. This limitation is attributed to the small normalized FAS size at this frequency, which leads to highly correlated received signals at the FAS ports. Conversely, when the normalized size of FAS increases in Fig.~\ref{subfig:TheoRvsU_26GHz} (i.e., at a higher frequency), the data rate improves significantly. The system can achieve a remarkable data rate, and the rate increases monotonically with the number of UEs.

With a fixed normalized FAS size, increasing the number of antenna ports results in a noteworthy performance enhancement, particularly when the antenna ports are not densely packed (i.e., when the minimum spacing $\geq 0.5 \lambda$). However, more dense port packing yields marginal gains for PUMA under rich sacttering conditions. Specifically, when compared to the case with a minimum spacing of $0.5\lambda$, the rate gain from a more compact arrangement with a minimum spacing of $0.2\lambda$ is marginal. Moreover, further increasing the number of ports by reducing the minimum spacing to $0.05\lambda$ results in negligible gain, or even performance loss at $6~{\rm GHz}$.

\begin{figure*}
\centering
\subfigure[$f_c = 6~{\rm GHz}$ (Not compact)]{\includegraphics[width = 0.49\linewidth]{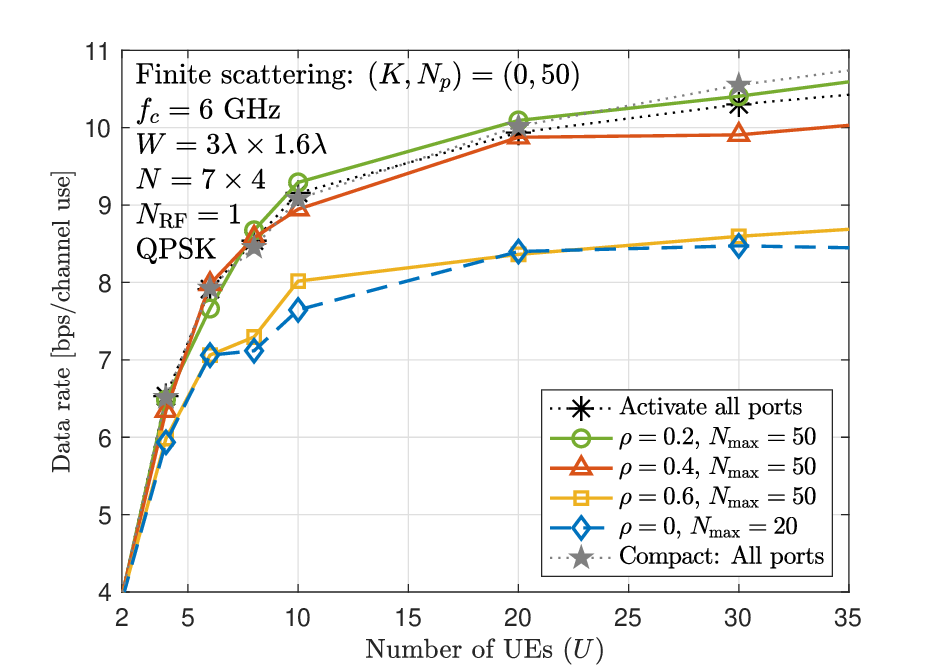}\label{subfig:RvsUFS_6GHzN}}
\subfigure[$f_c = 6~{\rm GHz}$ (Compact)]{\includegraphics[width = 0.49\linewidth]{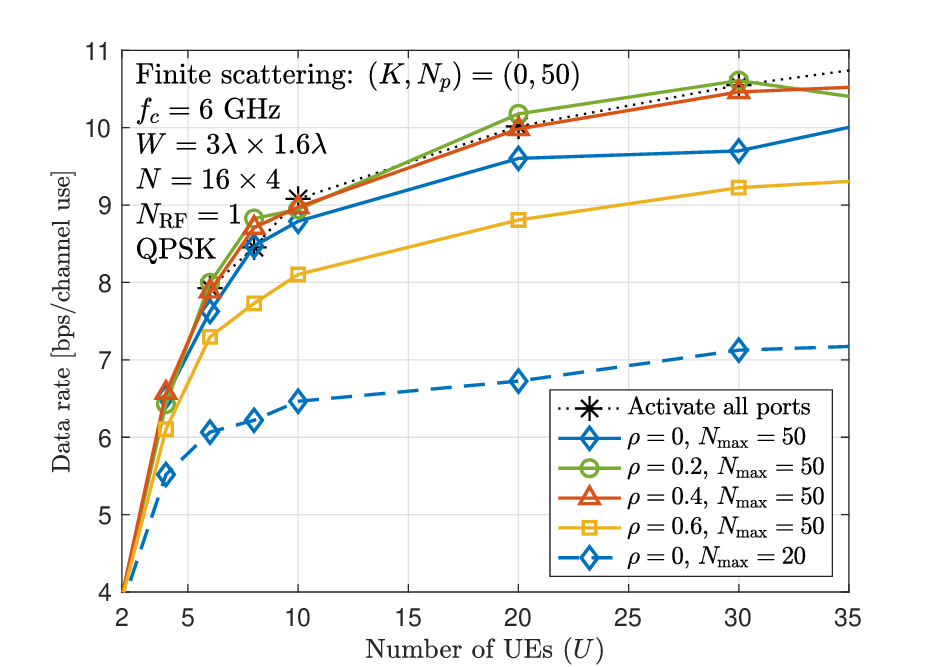}\label{subfig:RvsUFS_6GHzC}}\\
\subfigure[$f_c = 26~{\rm GHz}$ (Not compact)]{\includegraphics[width = 0.49\linewidth]{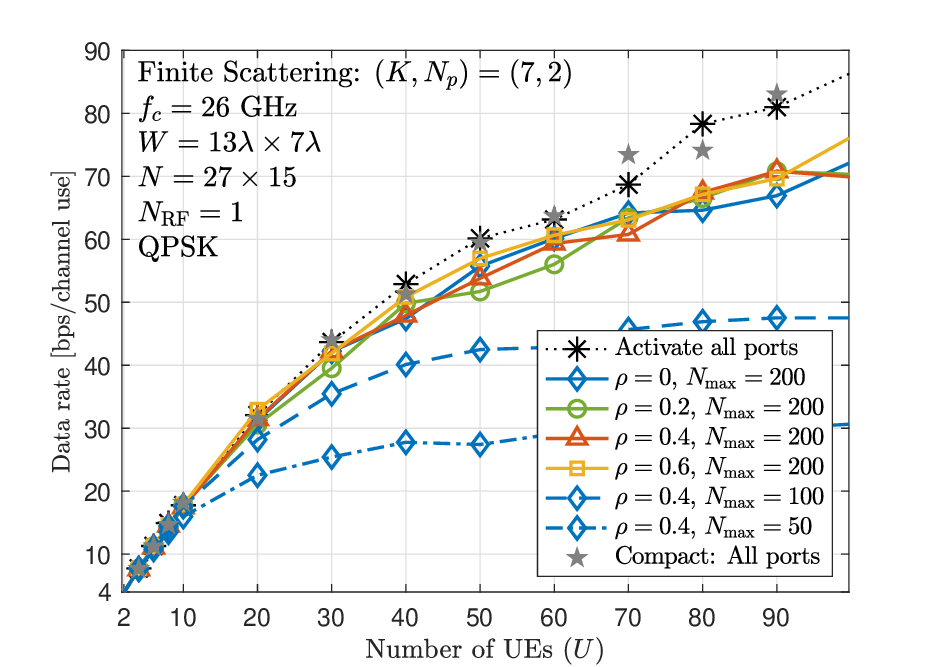}\label{subfig:RvsUFS_26GHzN}}
\subfigure[$f_c = 26~{\rm GHz}$ (Compact)]{\includegraphics[width = 0.49\linewidth]{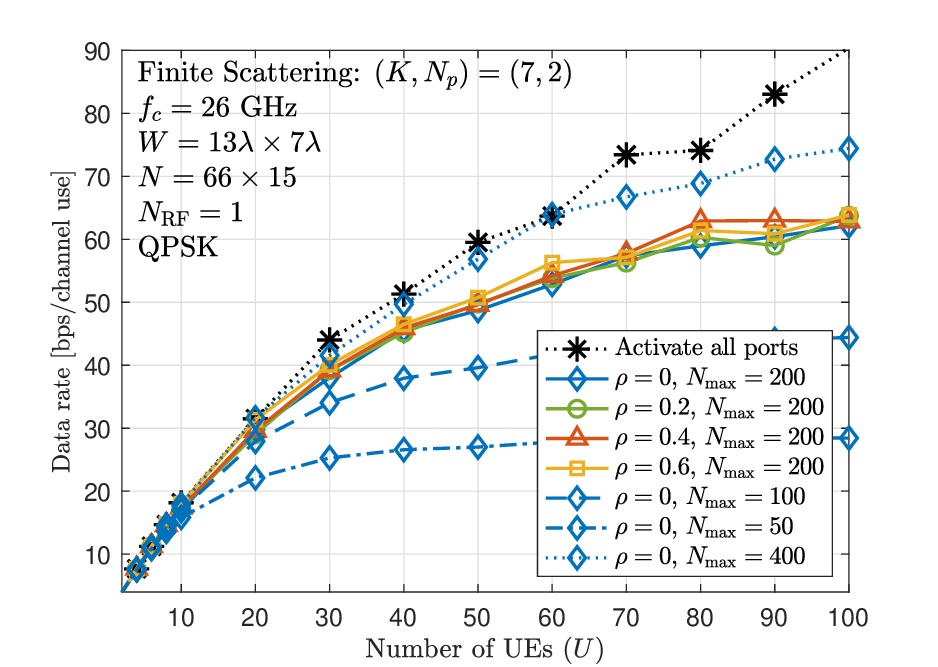}\label{subfig:RvsUFS_26GHzC}}
\caption{The rate performance of PUMA with different shortlist thresholds, $(\rho, N_{\max})$, under finite scattering with mutual coupling effects.}\label{Fig:RvsUFS}
\end{figure*}

\begin{figure*}
\centering
\subfigure[$f_c = 6~{\rm GHz}$ (Compact)]{\includegraphics[width = 0.49\linewidth]{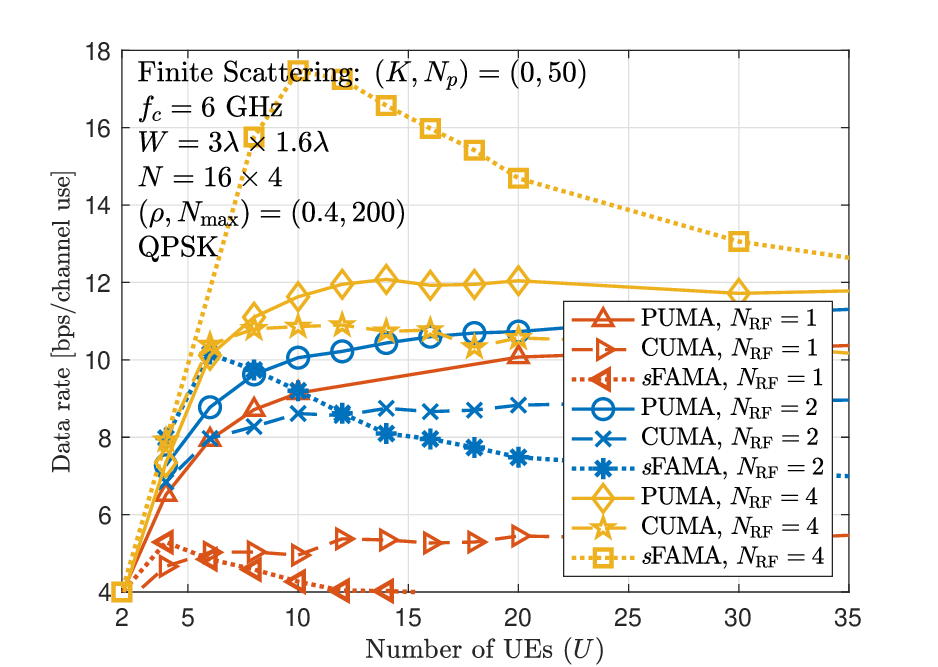}\label{subfig:DiffNRF_6GHzC}}
\subfigure[$f_c = 26~{\rm GHz}$ (Compact)]{\includegraphics[width = 0.49\linewidth]{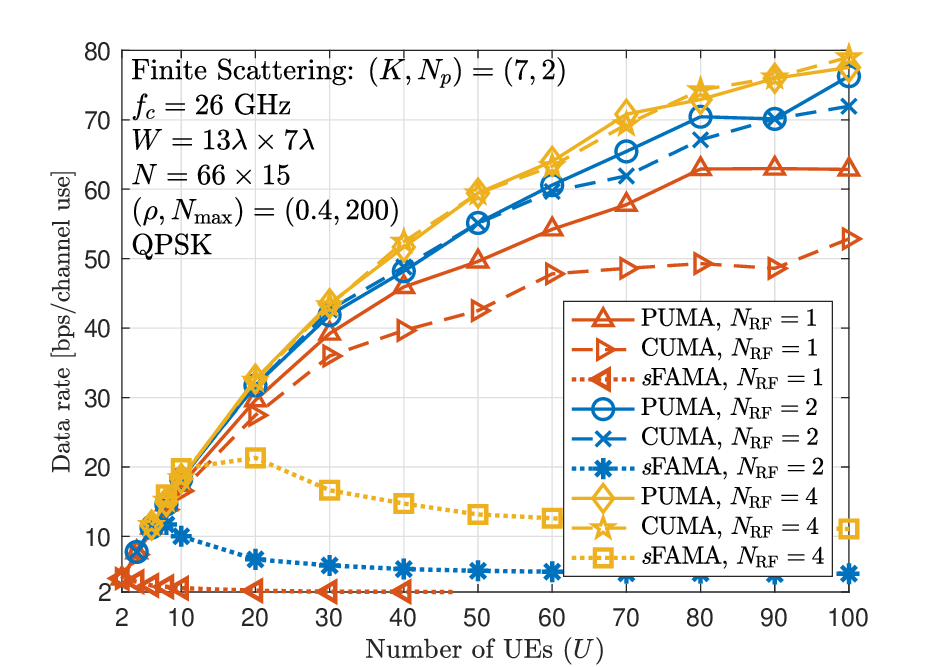}\label{subfig:DiffNRF_26GHzC}}
\caption{The rate performance of PUMA with different $N_{\rm RF}$, under finite scattering with mutual coupling effects, compared with CUMA and \emph{s}FAMA.}\label{Fig:RvsUFS_DiffNRF}
\end{figure*}

The empirical results of CUMA \cite{Wong2024cuma} and \emph{s}FAMA \cite{wong2023sFAMA} are depicted in black and gray in Fig. \ref{Fig:TheoRvsU}, respectively, for comparison. Note that each UE utilizes only a single RF chain, so the CUMA receiver employs the port activation metric of the in-phase channels exclusively with $\rho = 0$ and $N_{\max} = N$. As demonstrated, the proposed PUMA significantly outperforms both CUMA and \emph{s}FAMA under rich scattering. This gain is attributable to the channel-aware phase shifters during signal aggregation in PUMA. The superior performance of CUMA relative to \emph{s}FAMA, as demonstrated in \cite{Wong2024cuma}, arises from the fact that leveraging a large number of ports increases the efficacy of interference averaging. Moreover, the improved performance of PUMA in comparison to CUMA can be ascribed to the incorporation of phase shifters, which proactively mitigates interference during the aggregation process.

\subsection{Rate Performance under Finite Scattering}
This subsection presents the simulation results under finite scattering with mutual coupling. The channel parameters $(K,N_p)$ shown in Table \ref{Tab:SimPara} are selected carefully to reflect the characteristics of the respective frequency. A rectangular array of dipoles is assumed to form the 2D-FAS at each UE with row and column spacing depending on the compactness of FAS. The termination impedance is set to be $Z_T = 50~{\rm \Omega}$.

The results in Fig.~\ref{Fig:RvsUFS} are provided to investigate the average data rate performance of PUMA at various frequencies with different value of $(\rho,N_{\max})$. Figs.~\ref{subfig:RvsUFS_6GHzN} and \ref{subfig:RvsUFS_6GHzC} illustrate the rate results at $6~{\rm GHz}$. In Fig.~\ref{subfig:RvsUFS_6GHzN}, the results assume that the receiver has antenna ports with a minimum spacing of $0.5\lambda$, whereas Fig.~\ref{subfig:RvsUFS_6GHzC} considers the compact case with a minimum spacing of $0.2\lambda$. Similar to the observations in the rich scattering environment, PUMA exhibits suboptimal performance at $6~{\rm GHz}$ due to the limited effective FAS size. In the not-compact case depicted in Fig.~\ref{subfig:RvsUFS_6GHzN}, the total number of ports, $N = 28$, is relatively small, so the threshold $N_{\max} = 50 > N$ does not influence the port activation. The performance is highly sensitive to variations in $N_{\max}$ when $N$ is small. As can be observed, the performance deteriorates when $N_{\max}$ is reduced to $20$. Regarding the other shortlist parameter, $\rho$, varying from $0$ to $0.4$, is neither particularly advantageous nor harmful to the rate performance. However, the performance declines when $\rho$ becomes excessively large (e.g., $\rho = 0.6$), as this can reduce the number of ports available for constructive reception. In the compact case illustrated in Fig.~\ref{subfig:RvsUFS_6GHzC}, the results demonstrate that the rate increases with $N_{\max}$. Small values of $\rho$, such as $\rho = 0.2$ or $0.4$, provides marginal rate gain over $\rho = 0$, but further increasing $\rho$ to $0.6$ is detrimental to overall performance.

Figs.~\ref{subfig:RvsUFS_26GHzN} and \ref{subfig:RvsUFS_26GHzC} show the results at $26~{\rm GHz}$ for the not compact and compact scenarios, respectively. The results further reveal that increasing $N_{\max}$ enhances rate performance, but variations in $\rho$ at $26~{\rm GHz}$ have a minimal impact. Note that the benefits of increasing $N_{\max}$ are limited by a maximum threshold, as the rate at $N_{\max}=200$ or $400$ closely approximates the ideal rate achieved by activating all ports in the non-compact or compact cases, respectively, despite fewer than half of the ports being activated. Additionally, it is evident that relative to the lower frequencies in Figs.~\ref{subfig:RvsUFS_6GHzN} and \ref{subfig:RvsUFS_6GHzC}, the electrical size of the FAS at the receiver substantially affects performance. Maintaining identical physical dimensions, a larger normalized size at higher frequencies facilitates the accommodation of more UEs within the same channel. Also, the results of the compact cases with all ports activated are depicted in Figs.~\ref{subfig:RvsUFS_6GHzN} and \ref{subfig:RvsUFS_26GHzN} in gray. The results indicate that packing denser FAS ports at the UE is neither advantageous nor detrimental, as the performance in both scenarios remains remarkably consistent. This suggests that the effects of mutual coupling do not present any issues in PUMA. 

Fig.~\ref{Fig:RvsUFS_DiffNRF} evaluates the variations in performance with differing numbers of RF chains, $N_{\rm RF}$. Benchmark comparisons are provided with results for CUMA \cite{Wong2024cuma} and \emph{s}FAMA \cite{wong2023sFAMA}. For \emph{s}FAMA with multiple RF chains, the incremental port selection (IPS) method in \cite{hong2025multiport} are adopted. We first analyze the results at $6~{\rm GHz}$ in Fig. \ref{subfig:DiffNRF_6GHzC}. Although the data rate is comparatively lower at this frequency, it is evident that PUMA outperforms CUMA. Notably, when $N_{\rm RF} = 1$, PUMA delivers nearly twice the data rate relative to CUMA. The performance gains diminish as $N_{\rm RF}$ increases. Note that increasing $N_{\rm RF}$ yields only marginal performance enhancements for PUMA, attributable to the large values of $N_{\max}$. As such, the random activation method in \eqref{Eq:shortlistNm} effectively explores all constructive ports. Further increases in $N_{\rm RF}$ offer limited additional spatial diversity, resulting in only marginal improvements. Conversely, the IPS-based \emph{s}FAMA achieves its maximum data rate when $N_{\rm RF}=4$ at $6~{\rm GHz}$, consistent with the findings reported in \cite{hong2025multiport}. Fig.~\ref{subfig:DiffNRF_26GHzC} illustrates the data rate at $26~{\rm GHz}$. At this high frequency, the performance of \emph{s}FAMA declines. The proposed PUMA outperforms \emph{s}FAMA within the range of $N_{\rm RF}$ from $1$ to $4$. Additionally, PUMA achieves a higher data rate than CUMA with a single RF chain. Nevertheless, the performance advantage of PUMA over CUMA becomes marginal as $N_{\rm RF}$ increases. Overall, the proposed PUMA shows its performance gains most notably when the receiver has only a single RF chain. This gain is achieved through a group of channel-aware phase shifters engaged in signal aggregation. With an adequately large size of FAS, PUMA can support dozens of UEs on the same channel. Consistent with CUMA and \emph{s}FAMA, PUMA does not require CSI at the BS nor complex multiuser detection at the UE.  

\subsection{Coded Performance}
\begin{figure*}
\centering
\subfigure[Rich Scattering]{\includegraphics[width = 0.495\linewidth]{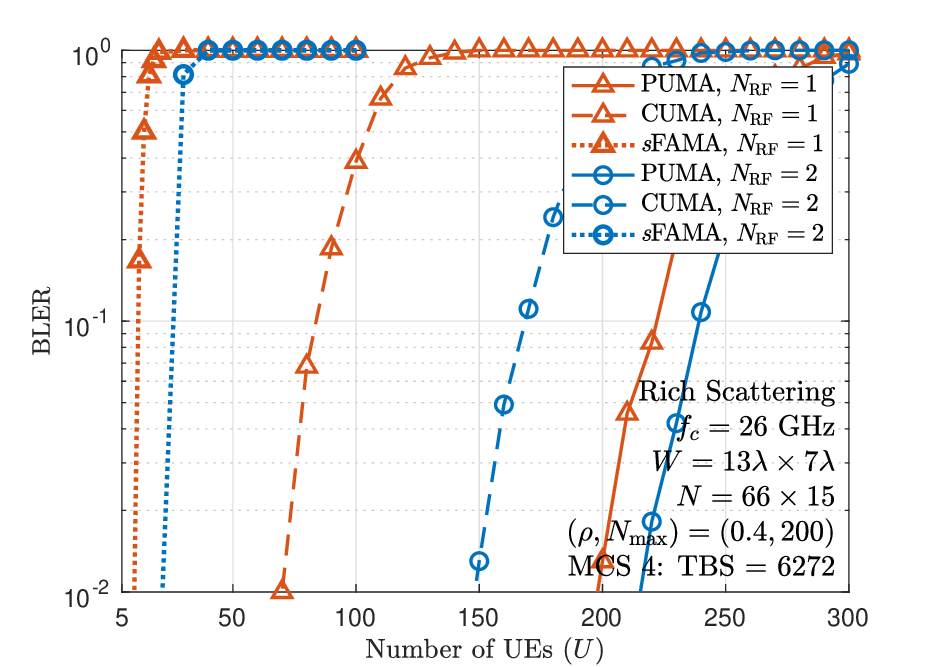}\label{subfig:BLERvsU_RS}}
\subfigure[Finite Scattering]{\includegraphics[width = 0.495\linewidth]{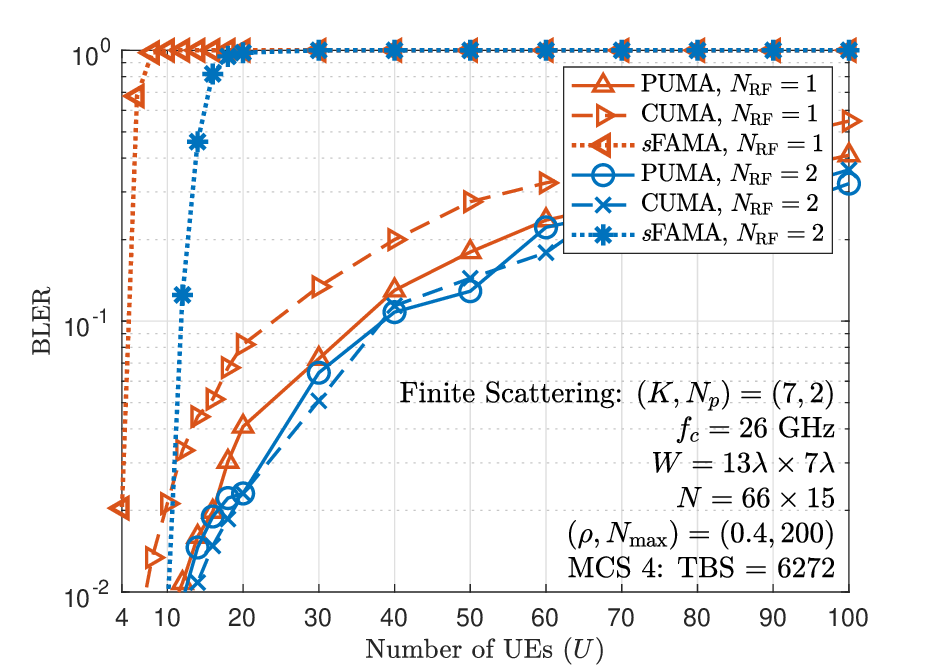}\label{subfig:BLERvsU_FS}}
\caption{The BLER performance of PUMA against $U$, under (a) rich scattering, and (b) finite scattering with mutual coupling.}\label{Fig:BLERvsU}
\end{figure*}

The proposed PUMA is subsequently evaluated within the coded system using 5G NR modulation and coding schemes (MCS) \cite{38214}. In this subsection, we focus on the $26~{\rm GHz}$ frequency with a compact configuration FAS at each UE. 

\begin{figure*}
\centering
\subfigure[Rich Scattering]{\includegraphics[width = 0.495\linewidth]{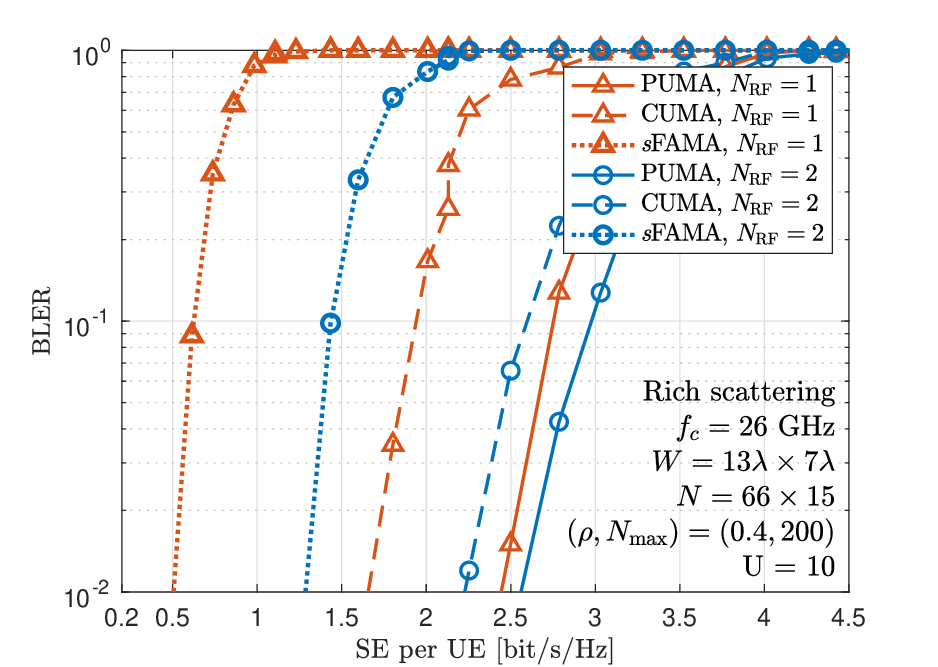}\label{subfig:BLERvsSE_RS}}
\subfigure[Finite Scattering]{\includegraphics[width = 0.495\linewidth]{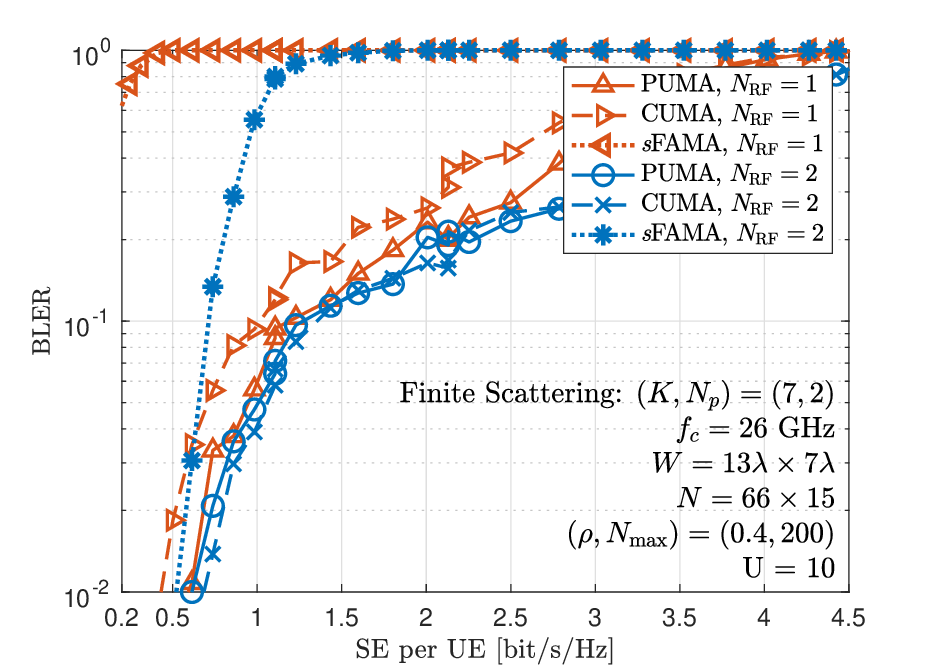}\label{subfig:BLERvsSE_FS}}
\caption{The BLER performance of PUMA agains spectral efficiency per UE, under (a) rich scattering, and (b) finite scattering with mutual coupling.}\label{Fig:BLERvsSE}
\end{figure*}

\begin{figure*}
\centering
\subfigure[Rich Scattering]{\includegraphics[width =0.495\linewidth]{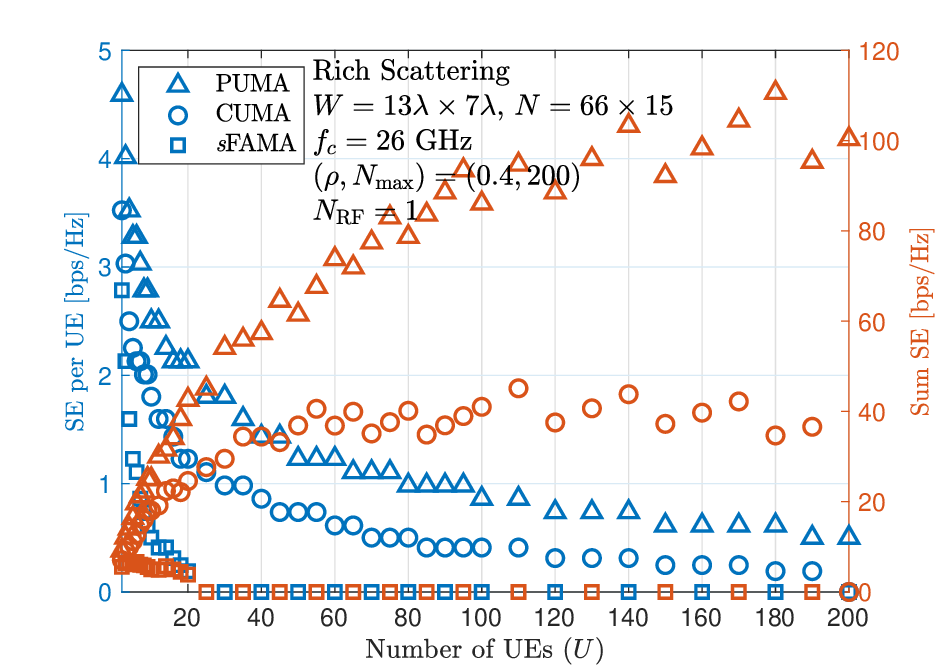}\label{subfig:SEvsU_RS}}
\subfigure[Finite Scattering]{\includegraphics[width =0.495\linewidth]{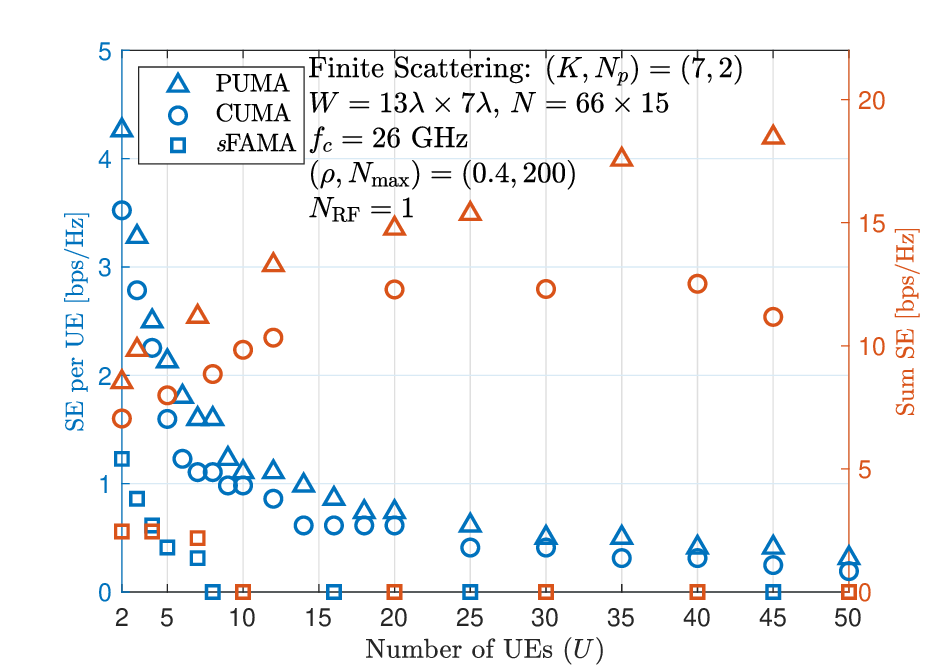}\label{subfig:SEvsU_FS}}
\caption{The spectral efficiency against the number of UEs, $U$, when the link is in sync (${\rm BLER} \leq 0.1$), under (a) rich scattering, and (b) finite scattering with mutual coupling.}\label{Fig:SEvsU}
\end{figure*}

Fig.~\ref{Fig:BLERvsU} depicts the block error rate (BLER) results in relation to the number of UEs, $U$, at a fixed MCS 4. MCS 4 utilizes QPSK and low-density parity-check (LDPC) channel coding with target code rate of $308/1024$, which corresponds to a spectral efficiency (SE) of approximately $0.5~{\rm bit/s/Hz}$ per UE. As can be seen in Fig.~\ref{subfig:BLERvsU_RS} under rich scattering, PUMA exhibits the best performance. It can support a greater number of UEs than both CUMA and \emph{s}FAMA with a fixed MCS, at a certain BLER threshold. Specifically, at an out-of-sync BLER of ${\rm BLER}_{\rm out} = 0.1$, PUMA can support about $220$ UEs with $N_{\rm RF} = 1$, each with an SE of $0.5~{\rm bit/s/Hz}$. In contrast, CUMA supports about $85$ UEs, while \emph{s}FAMA supports only $11$ UEs. Increasing $N_{\rm RF}$ to $2$ results in significant performance gains for CUMA, and PUMA's performance also improves slightly. With $N_{\rm RF} = 2$, PUMA supports about $240$ UEs, CUMA supports $170$ UEs, and \emph{s}FAMA supports $25$ UEs. In the environments characterized by finite scattering, as shown in Fig. \ref{subfig:BLERvsU_FS}, performance declines. In this setting, PUMA surpasses both CUMA and \emph{S}FAMA when $N_{\rm RF} = 1$, supporting roughly $35$ UEs with ${\rm BLER} \geq 0.1$, compared to $23$ UEs for CUMA and $4$ UEs for \emph{s}FAMA. When $N_{\rm RF}$ is increaseed to $2$, PUMA's performance aligns closely with that of CUMA. The performance of \emph{s}FAMA remains relatively low in both cases, exhibiting more pronounced downfall, i.e., a rapid BLER drops as $U$ increases. This phenomenon explains observed decline in data rate for \emph{s}FAMA with increasing $U$ in Fig.~\ref{Fig:RvsUFS_DiffNRF}.

Fig.~\ref{Fig:BLERvsSE} shows the BLER in relation to the SE per UE with a fixed $U=10$. Fig.~\ref{subfig:BLERvsSE_RS} pertains to a rich scattering scenario, and Fig.~\ref{subfig:BLERvsSE_FS} illustrate a finite scattering environment. When $N_{\rm RF} = 1$, PUMA achieves the highest SE at the same BLER threshold, compared to CUMA and \emph{s}FAMA. The per-UE SE gain of PUMA over CUMA is approximately $0.8~{\rm bps/Hz}$ under rich scattering and $0.2~{\rm bps/Hz}$ under finite scattering. The gain is notably higher compared to \emph{s}FAMA. Conversely, when $N_{\rm RF} = 2$, the performance gain of PUMA over CUMA becomes marginal. Under finite scattering, PUMA performs similarly to CUMA when $N_{\rm RF} = 2$, indicating that PUMA is more suitable for systems with limited RF chains, particularly when only one RF chain is employed at each UE.

Using the threshold of out-of-sync BLER of ${\rm BLER}_{\rm out} = 0.1$, we evaluate the maximum SE that the network can support across varying numbers of UEs, $U$. The results are depicted in Fig.~\ref{Fig:SEvsU}, where Fig.~\ref{subfig:SEvsU_RS} corresponds to a rich scattering environment and Fig.~\ref{subfig:SEvsU_FS} illustrates a finite scattering scenario with mutual coupling. The blue curves represent the SE per UE in relation to $U$, while the red curves depict the total sum SE. It is observed that PUMA achieves higher SE under both finite and rich scattering conditions. The gain of PUMA over CUMA is especially pronounced under rich scattering. Also, both PUMA and CUMA outperform \emph{s}-FAMA. The SE of \emph{s}-FAMA declines to $0~{\rm bps/Hz}$ when $U>20$ under rich scattering or $U>8$ in a finite scattering environment, indicating the system is out-of-sync with ${\rm BLER}>0.1$. The trend of the sum SE is consistent with the theoretical data rate from the uncoded simulations. Nonetheless, the SE in Fig.~\ref{subfig:SEvsU_RS} slightly surpasses the data rate in Fig.~\ref{subfig:TheoRvsU_26GHz} when $U$ is small, yet falls below the data rate in Fig.~\ref{subfig:TheoRvsU_26GHz} as $U$ increases. A similar trend is observed when comparing Figs.~\ref{sub@subfig:SEvsU_FS} and \ref{subfig:DiffNRF_26GHzC}. The higher SE at low $U$ can be attributed to the employment of  high-efficiency MCS within this regime, where the SE per UE exceeds $2~{\rm bps/channel~use}$. Conversely, the reduced rate at high $U$ is due to factors, including the overhead associated with pilot signals, cyclic prefix, and guard bands, as well as the more stringent out-of-threshold BLER criteria used. 

\section{Conclusion}\label{sec:conclusion}
In this paper, we proposed a novel FAMA methodology, PUMA, designed to address the extreme connectivity requirements of 6G networks. By advancing the principles of FAMA and CUMA, PUMA incorporates a phased array to facilitate constructive signal aggregation directly at each UE. This methodology exploits the spatial flexibility of FAS to inherently reduce co-user interference without requiring CSI at the BS or SIC at each UE. Our comprehensive theoretical analysis and extensive simulation results affirm that PUMA markedly improves the achievable data rate relative to existing FAMA and CUMA frameworks. The system is capable of supporting dozens or even hundreds of UEs on the same channel. Notably, PUMA sustains high spectral efficiency and robust interference mitigation even when utilizing only a single RF chain, making it an exceptionally hardware-efficient solution. Consequently, the proposed framework presents a promising and scalable architecture for future wireless networks, facilitating high-performance communication in environments characterized by high device densities and minimal signaling overhead.

\appendices
\section{Proof of Theorem \ref{Theo:GaussianSqrtX}}\label{app:proofTheoSqrtX}
It is established from the rich scattering channel model that $g_k^{(u,u)}\sim \mathcal{CN}(0,\sigma_g^2)$, and $\cov(g_k^{(u,u)},g_\ell^{(u,u)}) = \sigma_g^2\rho_{k,\ell}$. Consequently, $\overline{g}_k \triangleq \lvert g_k^{(u,u)}\rvert$ follows Rayleigh distribution, i.e., $\overline{g}_k \sim {\rm Rayleigh} (\sigma_g/\sqrt{2})$. Its covariance is expressed as \cite{midd1996proc}
\begin{align}\label{Eq:covabsg}
\cov (\overline{g}_k,\overline{g}_\ell) & = \mathbb{E}[\overline{g}_k\overline{g}_\ell] - \mathbb{E}[\overline{g}_k]\mathbb{E}[\overline{g}_\ell]\notag\\ 
&= \frac{\pi}{4}\sigma_g^2 \left[{}_2F_1\left(-\frac{1}{2},-\frac{1}{2};1;\rho_{k\ell}^2\right) \!-\! 1\right].
\end{align}

The mean of $\sqrt{X}$ in \eqref{Eq:meanSqrtX} is obtained by
\begin{equation}
\mu \triangleq \mathbb{E}[\sqrt{X}] =\sum_{k=1}^{N}  \mathbb{E}[\overline{g}_k] = \frac{N\sqrt{\pi}}{2}\sigma_g.
\end{equation}
The variance of $\sqrt{X}$ can be derived as
\begin{align}\label{Eq:deriveVarSqrtX}
\sigma_1^2 &\triangleq \var[\sqrt{X}] \nonumber\\ 
&= N \var[\overline{g}_k] + 2 \sum_{\ell = 2}^N \sum_{k=1}^{\ell-1} \cov (\overline{g}_k ,\overline{g}_\ell).
\end{align}
This is the classical result for finding the variance of a sum of dependable random variables. Upon substituting \eqref{Eq:covabsg} into \eqref{Eq:deriveVarSqrtX}, we there obtain \eqref{Eq:varSqrtX}.

Finally, we employ the Lyapunov or Lindeberg condition of the Central Limit Theorem (CLT) \cite{B1995Prob} to demonstrate that $\sqrt{X}$ is Gaussian when $N$ is large. Although the set $\{\overline{g}_k\}$ consists of dependent random variables, those that are spatially separated become approximately independent when $W$ is sufficiently large. Specifically, based on the physical design of FAS, the variables $\overline{g}_k$ and $\overline{g}_{k+m}$ are nearly independent when $mW_1\lambda/[N_2(N_1-1)]>\lambda/2$. The sequence $\{\overline{g}_k\}$ is characterized as $m$-dependent. Then under the condition of a finite twelfth moment, i.e., $\mathbb{E}[\overline{g}_k^{12}] < \infty$, \cite[Theorem 27.4]{B1995Prob} adopted the Lyapunov condition to prove that the series is asymptotically normal, which completes the proof.

\section{Proof of Theorem \ref{Theo:GaussianSu}}\label{app:proofTheoSu}
Let $h_k^{(\tilde{u})}\triangleq g_k^{(\tilde{u},u)}$, $\theta \triangleq \angle g_k^{(u,u)}$, $\tilde{g}_k \triangleq e^{-j\theta}$, we write $S_{\tilde{u}}$ as
\begin{equation}
S_{\tilde{u}} = \sum_{k = 1}^N h_k^{(\tilde{u})}\tilde{g}_k = \sum_{k=1}^N \tilde{h}_{k}^{(\tilde{u})},
\end{equation}
in which $\tilde{h}_k^{(\tilde{u})} \triangleq h_k^{(\tilde{u})}\tilde{g}_k$, $h_k^{(\tilde{u})} \sim \mathcal{CN}(0,\sigma_g^2)$, $\theta \sim \mathcal{U}[0,\pi)$, and $h_k^{(\tilde{u})}$ and $\tilde{g}_k$ are independent of each other. 

We first analyze the random variables $\{\tilde{g}_k\}$. The mean of $\tilde{g}_k$ can be directly determined as $\mathbb{E}[\tilde{g}_k] = \mathbb{E}[e^{-j\theta}]= 0$. Consequently, the variance can be derived as $\var [\tilde{g}_k] = \mathbb{E}[{\lvert \tilde{g}_k\rvert}^2] = 1$, given that $\lvert \tilde{g}_k\rvert \equiv 1$. The covariance is computed by calculating the expectation of the phase difference exponential for correlated complex Gaussian variables \cite{midd1996proc}, i.e., 
\begin{align}\label{Eq:covgdivabsg}
 \cov(\tilde{g}_k, \tilde{g}_\ell) & = \mathbb{E}[\tilde{g}_k\tilde{g}_\ell^*] = \mathbb{E}[e^{j(\theta_\ell - \theta_k)}] \nonumber \\
 & = \frac{\pi}{4}\rho_{ij} {}_2F_1 \left(\frac{1}{2},\frac{1}{2};2;\rho_{ij}^2\right).
\end{align}

Since $h_k^{(\tilde{u})}$ and $\tilde{g}_k$ are independent of each other, the mean of $\tilde{h}_k^{(\tilde{u})}$ can be calculated as
\begin{equation}\label{Eq:meanterminSu}
\mathbb{E}[\tilde{h}_k^{(\tilde{u})}] = \mathbb{E}[{h}_k^{(\tilde{u})}] \times \mathbb{E}[\tilde{g}_k] = 0,
\end{equation}
and its variance is derived as
\begin{align}\label{Eq:varterminSu}
\var [\tilde{h}_k^{(\tilde{u})}] &= \var[{h}_k^{(\tilde{u})}] \times \var[\tilde{g}_k] + \var[{h}_k^{(\tilde{u})}]\times {(\mathbb{E}[\tilde{g}_k])}^2  \notag\\ 
&\quad +\var [\tilde{g}_k] \times {(\mathbb{E}[{h}_k^{(\tilde{u})}])}^2 \nonumber \\
&= \sigma_g^2.
\end{align}
The covariance can be obtained by
\begin{align}\label{Eq:covterminSu}
\cov(\tilde{h}_k^{(\tilde{u})},\tilde{h}_\ell^{(\tilde{u})})
&\hspace{.5mm} = \mathbb{E} \left[\tilde{h}_k^{(\tilde{u})}{(\tilde{h}_\ell^{(\tilde{u})})}^*\right] \nonumber \\ 
&\overset{(a)}{=} \mathbb{E} \left[{h}_k^{(\tilde{u})}{({h}_\ell^{(\tilde{u})})}^*\right] \mathbb{E} [\tilde{g}_k \tilde{g}_\ell^*] \nonumber \\
& \overset{(b)}{=} \frac{\pi\sigma_g^2}{4}\rho_{i,j}^2 {}_2F_1 \left(\frac{1}{2},\frac{1}{2};2;\rho_{ij}^2\right),
\end{align}
where $(a)$ uses the fact that $h_k^{\tilde{u}}$ and $\tilde{g}_k$ are independent, $(b)$ is derived from substituting \eqref{Eq:covgdivabsg} and using $\mathbb{E} [{h}_k^{(\tilde{u})}{({h}_\ell^{(\tilde{u})})}^*] = \sigma_g^2\rho_{i,j}$. Then the variance of $s_{\tilde{u}}$ can be calculated by
\begin{align}\label{Eq:varSu1}
\var [S_{\tilde{u}}] = N\var [\tilde{h}_k^{(\tilde{u})}] + 2 \sum_{\ell = 2}^{N} \sum_{k=1}^{\ell-1} \cov(\tilde{h}_k^{(\tilde{u})},\tilde{h}_\ell^{(\tilde{u})}).
\end{align}
Substituting \eqref{Eq:varterminSu} and \eqref{Eq:covterminSu} into \eqref{Eq:varSu1} yields \eqref{Eq:varSu}. Moreover, we can have $\mathbb{E}[S_{\tilde{u}}] = 0$ because of \eqref{Eq:meanterminSu}. Lastly, the argument presented in Appendix \ref{app:proofTheoSqrtX} can be employed to conclude that the series $S_{\tilde{u}}$ is asymptotically normal.

\section{Proof of Theorem \ref{Theo:pdfZ}}\label{app:proofTheopdfZ}
Given the PDF of $X$ in Corollary \ref{Theo:pdfX} and that of $\tilde{Y}$ in Corollary \ref{Theo:pdfYY}, we derive the PDF of $Z = X/\tilde{Y}$ as 
\begin{align}\label{Eq:proodpffZ}
f_{Z}(z) &= \int_0^\infty \tilde{y} f_X(z\tilde{y}) f_{\tilde{Y}}(\tilde{y})d\tilde{y} \nonumber \\
& = \frac{\mu_1^{\frac{1}{2}}e^{-\frac{\mu_1^2}{2\sigma_1^2}}}{2\sigma_1^2 z^{\frac{1}{4}}\Gamma(U-1)}\nonumber\\
&\quad \times \int_0^\infty \tilde{y}^{U-\frac{5}{4}} e^{-\tilde{y}\left(1+\frac{z}{2\sigma_1^2}\right)} I_{-\frac{1}{2}}\left(\frac{\mu_1}{\sigma_1^2}\sqrt{z\tilde{y}}\right) d\tilde{y},
\end{align}
where $\mu_1$ and $\sigma_1^2$ are specified in \eqref{Eq:meanSqrtX} and \eqref{Eq:varSqrtX}, respectively. We have the integral identity \cite[6.643.2]{intifentity}:
\begin{multline}\label{Eq:intIdentity}
\int_0^\infty x^{\mu - \frac{1}{2}} e^{-\alpha x} I_{2\nu}\left(2b\sqrt{x}\right)dx\\
=\frac{\Gamma(\mu+\nu+\frac{1}{2})}{\Gamma(2\nu+1)}b^{-1}e^{\frac{b^2}{2\alpha}} \alpha^{-\mu} \mathcal{M}_{-\mu,\nu} \left(\frac{b^2}{\alpha}\right).
\end{multline}
Applying \eqref{Eq:intIdentity} by setting $\mu = U-\frac{3}{4}$, $\alpha = 1+\frac{z}{2\sigma_1^2}$, $\nu = -\frac{1}{4}$, and $b = \frac{\mu_1\sqrt{z}}{2\sigma_1^2}$, we are able to compute the integral in \eqref{Eq:proodpffZ}, thereby finally deriving \eqref{Eq:pdfZ}, which completes the proof.

\bibliographystyle{IEEEtran}

\begin{thebibliography}{10}
\providecommand{\url}[1]{#1}
\csname url@samestyle\endcsname
\providecommand{\newblock}{\relax}
\providecommand{\bibinfo}[2]{#2}
\providecommand{\BIBentrySTDinterwordspacing}{\spaceskip=0pt\relax}
\providecommand{\BIBentryALTinterwordstretchfactor}{4}
\providecommand{\BIBentryALTinterwordspacing}{\spaceskip=\fontdimen2\font plus
\BIBentryALTinterwordstretchfactor\fontdimen3\font minus
  \fontdimen4\font\relax}
\providecommand{\BIBforeignlanguage}[2]{{%
\expandafter\ifx\csname l@#1\endcsname\relax
\typeout{** WARNING: IEEEtran.bst: No hyphenation pattern has been}%
\typeout{** loaded for the language `#1'. Using the pattern for}%
\typeout{** the default language instead.}%
\else
\language=\csname l@#1\endcsname
\fi
#2}}
\providecommand{\BIBdecl}{\relax}
\BIBdecl
\bibitem{andrews20246gtakes}
J. G. Andrews, T. E. Humphreys and T. Ji, ``6G takes shape,'' {\em IEEE BITS Inf. Theory Mag.}, vol. 4, no. 1, pp. 2--24, Mar. 2024.
\bibitem{ngo2024ultradense}
H. Q. Ngo, G. Interdonato, E. G. Larsson, G. Caire and J. G. Andrews, ``Ultradense cell-free massive MIMO for 6G: Technical overview and open questions,'' \emph{Proc. IEEE}, vol. 112, no. 7, pp. 805--831, Jul. 2024.
\bibitem{ding2025iot}
N. Ding, X. Ouyang, L. Gao, J. Huang and G. Xing, ``An overview on economic analysis of internet of everything,'' {\em IEEE Commun. Surv. \& Tut.}, vol. 27, no. 6, pp. 3742--3771, Dec. 2025.
\bibitem{nguyen20226ginternet}
D. C. Nguyen {\em et al.}, ``6G internet of things: A comprehensive survey,'' {\em IEEE Internet Things J.}, vol. 9, no. 1, pp. 359--383, Jan. 2022.
\bibitem{silva2025distributed}
M. V. da Silva, E. Eldeeb, M. Shehab, H. Alves and R. D. Souza, ``Distributed learning methodologies for massive machine type communication,'' {\em IEEE Internet Things Mag.}, vol. 8, no. 1, pp. 102--108, Jan. 2025.
\bibitem{wang2024extremely}
Z. Wang {\em et al.}, ``Extremely large-scale MIMO: Fundamentals, challenges, solutions, and future directions,'' \emph{IEEE Wireless Commun.}, vol. 31, no. 3, pp. 117--124, Jun. 2024.
\bibitem{clercks2024multiple}
B. Clerckx {\em et al.}, ``Multiple access techniques for intelligent and multifunctional 6G: Tutorial, survey, and outlook,'' \emph{Proc. IEEE}, vol. 112, no. 7, pp. 832--879, Jul. 2024.
\bibitem{ahmed2024unveil}
A. Ahmed {\em et al.}, ``Unveiling the potential of NOMA: A journey to next generation multiple access,'' \emph{IEEE Commun. Surv. \& Tut.}, vol. 27, no. 5, pp. 3099--3164, Oct. 2025.
%
\bibitem{wong2022FAMA}
K. K. Wong and K. F. Tong, ``Fluid antenna multiple access,'' \emph{IEEE Trans. Wireless Commun.}, vol.~21, no.~7, pp. 4801--4815, Jul. 2022.
\bibitem{hong2026famasurvey}
H. Hong {\em et al.}, ``Fluid antenna multiple access for 6G: A holistic review,''  {\em IEEE Open J. Commun. Soc.}, vol. 7, pp. 2607--2633, 2026.
\bibitem{wong2020FAS}
K. K. Wong, K. F. Tong, Y. Zhang, and Z. Zheng, ``Fluid antenna system for {6G}: When {Bruce Lee} inspires wireless communications,'' {\em Elect. Lett.}, vol.~56, no.~24, pp.~1288--1290, Nov. 2020.
\bibitem{wong2021FAS}
K. K. Wong, A. Shojaeifard, K. F. Tong, and Y. Zhang, ``Fluid antenna systems,'' {\em IEEE Trans. Wireless Commun.}, vol. 20, no. 3, pp. 1950--1962, Mar. 2021.
\bibitem{New2024aTutorial}
W. K. New {\em et al.}, ``A tutorial on fluid antenna system for 6G networks: Encompassing communication theory, optimization methods and hardware designs,'' \emph{IEEE Commun. Surv. \& Tut.}, vol. 27, no. 4, pp. 2325--2377, Aug. 2025.
\bibitem{Lu2025fluid}
W.-J. Lu {\em et al.}, ``Fluid antennas: Reshaping intrinsic properties for flexible radiation characteristics in intelligent wireless networks,'' {\em IEEE Commun. Mag.}, vol. 63, no. 5, pp. 40--45, May 2025.
\bibitem{hong2025contemporary}
H. Hong {\em et al.}, ``A contemporary survey on fluid antenna systems: Fundamentals and networking perspectives,'' {\em  IEEE Trans. Netw. Sci. Eng.}, vol. 13, pp. 2305--2328, 2026.
\bibitem{New2026jsac}
W. K. New {\em et al.}, ``Fluid antenna systems: Redefining reconfigurable wireless communications,'' {\em IEEE J. Select. Areas Commun.},  vol. 44, pp. 1013--1044, 2026.
\bibitem{shen2024design}
Y.~Shen \emph{et al.}, ``Design and experimental validation of mmWave surface wave enabled fluid antennas for future wireless communications,'' \emph{IEEE Antennas \& Wireless Propag. Lett.}, vol. 25, no. 4, pp. 1467--1471, Apr. 2026.
\bibitem{Shamim-2025}
R. Wang {\em et al.}, ``Electromagnetically reconfigurable fluid antenna system for wireless communications: Design, modeling, algorithm, fabrication, and experiment,'' {\em IEEE J. Select. Areas Commun.}, vol. 44, pp. 1464--1479, 2026.
\bibitem{Liu2025programmable}
B. Liu, K.-F. Tong, K. K. Wong, C.-B. Chae, and H. Wong, ``Programmable meta-fluid antenna for spatial multiplexing in fast fluctuating radio channels,'' {\em Optics Express}, vol. 33, no. 13, pp. 28898--28915, 2025.
\bibitem{Zhangjsac2026}
S. Zhang {\em et al.}, ``Fluid antenna systems enabled by reconfigurable holographic surfaces: Beamforming design and experimental validation,'' {\em IEEE J. Select. Areas Commun.},  vol. 44, pp. 1417--1431, 2026.
\bibitem{zhang2024pixel}
J. Zhang {\em et al.}, ``A novel pixel-based reconfigurable antenna applied in fluid antenna systems with high switching speed,'' {\em IEEE Open J. Antennas \& Propag.}, vol. 6, no. 1, pp. 212--228, Feb. 2025.
\bibitem{liu2025iot}
B. Liu, T. Wu, K. K. Wong, H. Wong, and K. F. Tong, ``Wideband pixel-based fluid antenna system: An antenna design for smart city,'' {\em IEEE Internet of Things J.}, vol. 13, no. 4, pp. 6850--6862, Feb. 2026.
\bibitem{espinosa2024anew} 
P. Ram\'{i}rez-Espinosa, D. Morales-Jimenez and K. K. Wong, ``A new spatial block-correlation model for fluid antenna systems,'' \emph{IEEE Trans. Wireless Commun.}, vol. 23, no. 11, pp. 15829--15843, Nov. 2024.
\bibitem{New2023fluid}
W. K. New, K. K. Wong, H. Xu, K. F. Tong and C.-B. Chae, ``Fluid antenna system: New insights on outage probability and diversity gain,''  {\em IEEE Trans. Wireless Commun.}, vol. 23, no. 1, pp. 128--140, Jan. 2024.
\bibitem{zhu2025fluid}
X. Zhu {\em et al.}, ``Fluid antenna systems: A geometric approach to error probability and fundamental limits,'' {\em arXiv preprint}, \url{arXiv:2509.08815}, Sept. 2025.
\bibitem{Khammassi2023}
M. Khammassi, A. Kammoun and M.-S. Alouini, ``A new analytical approximation of the fluid antenna system channel,'' {\em IEEE Trans. Wireless Commun.}, vol. 22, no. 12, pp. 8843--8858, Dec. 2023.
\bibitem{zhang2025finite}
Z. Zhang \emph{et al.}, ``Finite-blocklength fluid antenna systems," {\em arXiv preprint}, \url{arXiv:2509.15643v2}, Apr. 2026.
\bibitem{xu2023channel} 
H. Xu {\em et al.}, ``Channel estimation for {FAS}-assisted multiuser {mmWave} systems,'' {\em IEEE Commun. Lett.}, vol.~28, no.~3, pp.~632--636, Mar. 2024.
\bibitem{zhang2025successive}
Z. Zhang, J. Zhu, L. Dai and R. W. Heath, Jr., ``Successive Bayesian reconstructor for channel estimation in fluid antenna systems,'' {\em IEEE Trans. Wireless Commun.}, vol. 24, no. 3, pp. 1992--2006, Mar. 2025.
\bibitem{ghadi2024on} 
F. Rostami Ghadi {\em et al.}, ``On performance of RIS-aided fluid antenna systems,'' {\em IEEE Wireless Commun. Lett.}, vol. 13, no. 8, pp. 2175--2179, Aug. 2024.
\bibitem{xiao2025fluid}
H. Xiao {\em et al.}, ``Fluid reconfigurable intelligent surfaces: Joint on-off selection and beamforming with discrete phase shifts,'' {\em IEEE Wireless Commun. Lett.}, vol. 14, no. 10, pp. 3124--3128, Oct. 2025.
\bibitem{zhou2024sensing} 
L. Zhou, J. Yao, M. Jin, T. Wu and K. -K. Wong, ``Fluid antenna-assisted ISAC systems,'' {\em IEEE Wireless Commun. Lett.}, vol. 13, no. 12, pp. 3533--3537, Dec. 2024.
\bibitem{zhang2026jsac}
Z. Zhang, \emph{et al.}, ``On fundamental limits for fluid antenna-assisted integrated sensing and communications for unsourced random access," \emph{IEEE J. Sel. Areas Commun.}, vol. 44, pp. 136--149, 2026. 
\bibitem{hong2026FDSIC} 
H. Hong {\em et al.}, ``Fluid antenna system-assisted self-interference cancellation for in-band full duplex communications,'' {\em IEEE Trans. Wireless Commun.}, vol. 25, pp. 7476--7489, 2026.
\bibitem{tang2026FD}
B. Tang {\em et al.}, ``Full-duplex FAS-assisted base station for ISAC,'' {\em IEEE Trans. Wireless Commun.}, vol. 25, pp. 2922--2938, 2026.
\bibitem{hong2025fasofdm}
H. Hong {\em et al.}, ``FAS meets OFDM: Enabling wideband 5G NR,'' {\em IEEE Trans. Commun.}, vol. 73, no. 11, pp. 12884--12898, Nov. 2025.
\bibitem{wong2022fast}
K. K. Wong, K. F. Tong, Y. Chen, and Y. Zhang, ``Fast fluid antenna multiple access enabling massive connectivity,'' \emph{IEEE Commun. Lett.}, vol.~27, no.~2, pp. 711--715, Feb. 2023.
\bibitem{wong2023sFAMA}
K. K. Wong, D. Morales-Jimenez, K. F. Tong, and C. B. Chae, ``Slow fluid antenna multiple access,'' \emph{IEEE Trans. Commun.}, vol.~71, no.~5, pp. 2831--2846, May 2023.
\bibitem{Xu2024revisiting}
H.~Xu {\em et al.}, ``Revisiting outage probability analysis for two-user fluid antenna multiple access system,'' \emph{IEEE Trans. Wireless Commun.}, vol. 23, no. 8, pp. 9534--9548, Aug. 2024.
\bibitem{coma2024slow}
J. P. Gonz\'{a}lez-Coma and F. J. L\'{o}pez-Mart\'{i}nez, ``Slow fluid antenna multiple access with multiport receivers,'' {\em IEEE Wireless Commun. Lett.}, vol. 15, pp. 1280--1284, 2026.
\bibitem{zhang2025sFAMA}
Z. Zhang, \emph{et al.}, ``On fundamental limits of slow-fluid antenna multiple access for unsourced random access,'' \emph{IEEE Wireless Commun. Lett.}, vol. 14, no. 11, pp. 3455--3459, Nov. 2025. 
\bibitem{hong2025multiport}
H. Hong {\em et al.}, ``Multi-port selection for FAMA: Massive connectivity with fewer RF chains than users,'' {\em arXiv preprint}, \url{arXiv:2511.17897}, Nov. 2025.
\bibitem{hong20245gcoded}
H. Hong, K. K. Wong, K. F. Wong, H. Xu and H. Li, ``5G-coded fluid antenna multiple access over block fading channels,'' {\em Elect. Lett.}, vol. 61, no. 1, Jan. 2025.
\bibitem{hong2025coded}
H. Hong, K. K. Wong, K. F. Tong, H. Shin, and Y. Zhang, ``Coded fluid antenna multiple access over fast fading channels,'' \emph{IEEE Wireless Commun. Lett.}, vol.~14, no.~4, pp.~1249--1253, Apr. 2025.
\bibitem{waqar2025turbocharging}
N. Waqar, K. K. Wong, C.-B. Chae and R. Murch, ``Turbocharging fluid antenna multiple access,'' {\em IEEE Trans. Wireless Commun.}, vol. 25, pp. 4038--4052, 2026.
\bibitem{hong2025Downlink}
H. Hong {\em et al.}, ``Downlink OFDM-FAMA in 5G-NR systems,'' {\em IEEE Trans. Wireless Commun.}, vol. 24, no. 12, pp. 10116--10132, Dec. 2025.
\bibitem{Wong2024cuma}
K. K. Wong, C. B. Chae, and K. F. Tong, ``Compact ultra massive antenna array: A simple open-loop massive connectivity scheme,'' {\em IEEE Trans. Wireless Commun.}, vol. 23, no. 6, pp. 6279--6294, Jun. 2024.
\bibitem{wong2024compact}
K. K. Wong, ``Compact ultra massive array (CUMA) with 4 RF chains for massive connectivity,'' in {\em Proc. IEEE Int. Workshop Sig. Proc. Adv. Wireless Commun. (SPAWC)}, pp. 286--290, 10-13 Sept. 2024, Lucca, Italy.
\bibitem{rao2025geometric}
C. Rao {\em et al.}, ``Geometric port selection in CUMA Systems,'' {\em arXiv preprint}, \url{arXiv:2509.20299}, Sept. 2025.
\bibitem{uchino2025PS}
T. Uchino {\em et al.}, ``A compact D-band phase shifter with 0.1-degree phase resolution and ultra-low phase error in 65-nm CMOS,'' {\em IEEE J. Solid-State Circuits}, vol. 60, no. 10, pp. 3566--3576, Oct. 2025.
\bibitem{morishita2025_150}
Y. Morishita {\em et al.}, ``150 GHz-band compact phased-array AiP module for XR applications toward 6G,'' {\em IEEE Microw. Wireless Technol. Lett.}, vol. 35, no. 6, pp. 920--923, June 2025.

\bibitem{wong2022extra}
K. K. Wong, K. F. Tong, Y. Chen, and Y. Zhang, ``Extra-large MIMO enabling slow fluid antenna massive access for millimeter-wave bands,'' {\em IET Elect. Lett.}, vol. 58, no. 25, pp. 1016--1018, Dec. 2022.

\bibitem{buzzi2016clustered}
S.~Buzzi and C.~D'Andrea, ``On clustered statistical MIMO millimeter wave channel simulation,'' \emph{arXiv preprint}, \url{arXiv:1604.00648v2}, May 2016.
\bibitem{Clerckx2007MC}
B. Clerckx, C. Craeye, D. Vanhoenacker-Janvier, and C. Oestges, ``Impact of antenna coupling on $2 \times 2$ MIMO communications,'' {\em IEEE Trans. Veh. Technol.}, vol. 56, no. 3, pp. 1009--1018, May 2007.
\bibitem{lee1988analysis}
K. M. Lee and R.-S. Chu, ``Analysis of mutual coupling between a finite phased array of dipoles and its feed network,'' {\em IEEE Trans. Antennas Propag.}, vol. 36, no. 12, pp. 1681--1699, Dec. 1988.
\bibitem{Singh2013MC}
H. Singh, H. L. Sneha, and R. M. Jha, ``Mutual coupling in phased arrays: A review,'' {\em Int. J. Antennas Propag.}, vol. 2013, no. 1, pp. 348123, Apr. 2013.
\bibitem{khan2014anew}
V. Khandelwal and Karmeshu, ``A new approximation for average symbol error probability over log-normal channels,'' {\em IEEE Wireless Commun. Lett.}, vol. 3, no. 1, pp. 58--61, Feb. 2014.
\bibitem{38214} 
``{NR}; {P}hysical layer procedures for data,'' Available [Online]: \url{https://www.3gpp.org/ftp/Specs/archive/38_series/38.214/38214-i40.zip}, Last Accessed on 2024-09-23.
\bibitem{midd1996proc}
David Middleton, ``Processes Derived from the Normal,'' {\em An Introduction to Statistical Communication Theory: An IEEE Press Classic Reissue}, IEEE, 1996, pp.396-437.
\bibitem{B1995Prob}
P. Billingsley, Probability and Measure (Probability and Mathematical Statistics), 3rd ed. Hoboken, NJ, USA: Wiley, 1995.
\bibitem{intifentity}
D. Zwillinger, V. Moll, I. S. Gradshteyn, and I. M. Ryzhik, ``6-7 Definite integrals of special functions,'' {\em Table of Integrals, Series, and Products (Eighth Edition)}, Academic Press, 2014, pp. 637--775.
\end{thebibliography}

\end{document}